\newcommand{\wt}{\mathsf{wt}}
\newcommand{\val}{\mathsf{val}}
\renewcommand{\epsilon}{\varepsilon}
\newcommand{\ER}{Erd\H{o}s-R\'{e}nyi}
\newcommand{\ad}{\bar{d}}
\newcommand{\kstar}{k^*}
\newcommand{\kprime}{k'}
\newcommand{\kprimeprime}{k''}
\newcommand{\cN}{\mathcal{N}}
\newcommand{\Far}{\mathsf{Far}}
\newcommand{\mypar}[1]{\medskip\noindent\textbf{#1}}
\title{Efficiently Estimating Erd\H{o}s-R\'{e}nyi Graphs with \\ Node Differential Privacy}
\author[1]{Adam Sealfon}
\author[2]{Jonathan Ullman}
\affil[1]{EECS and CSAIL, MIT}
\affil[2]{Khoury College of Computer Sciences, Northeastern University}
\begin{document}

\maketitle


\begin{abstract}
We give a simple, computationally efficient, and node-differentially-private algorithm for estimating the parameter of an \ER~graph---that is, estimating $p$ in a $G(n,p)$---with near-optimal accuracy.  Our algorithm nearly matches the information-theoretically optimal exponential-time algorithm for the same problem due to Borgs et al.~(FOCS 2018).  More generally, we give an optimal, computationally efficient, private algorithm for estimating the edge-density of any graph whose degree distribution is concentrated on a small interval. 
\end{abstract}

\section{Introduction}

Network data modeling individuals and relationships between individuals are increasingly central in data science.  However, while there is a highly successful literature on differentially private statistical estimation for traditional iid data, the literature on estimating network models is far less well developed.

Early work on private network data focused on \emph{edge-differential-privacy}, in which the algorithm is required to ``hide'' the presence or absence of a single edge in the graph (see, e.g.~\cite{NissimRS07,HayLMJ09,KarwaRSY14,GuptaRU12,BlockiBDS12,XiaoCT14,KarwaS16}, and many others).  A more desirable notion of privacy is \emph{node-differential privacy (node-DP)}, which requires the algorithm to hide the presence or absence of an arbitrary set of edges incident on a single node.  Although node-DP is difficult to achieve without compromising accuracy, the beautiful works of Blocki \etal~\cite{BlockiBDS13} and Kasiviswanathan \etal~\cite{KasiviswanathanNRS13} showed how to design accurate node-DP estimators for many interesting graph statistics via \emph{Lipschitz extensions.}  However, many of the known constructions of Lipschitz extensions require exponential running time, and constructions of computationally efficient Lipschitz extensions~\cite{RaskhodnikovaS16,CummingsD18,CanonneKMUZ19} lag behind.  As a result, even for estimating very simple graph models, there are large gaps in accuracy between the best known computationally efficient algorithms and the information theoretically optimal algorithms.

In this work we focus on what is arguably the simplest model of network data, the \emph{\ER~graph}.  In this model, denoted $G(n,p)$, we are given a number of nodes $n$ and a parameter $p \in [0,1]$, and we sample an $n$-node graph $G$ by independently including each edge $(i,j)$ for $1 \leq i < j \leq n$ with probability $p$.  The goal is to design a node-DP algorithm that takes as input a graph $G \sim G(n,p)$ and outputs an estimate $\hat{p}$ of the edge density parameter $p$.

Surprisingly, until an elegant recent work of Borgs \etal~\cite{BorgsCSZ18}, the optimal accuracy for estimating the parameter $p$ in a $G(n,p)$ via node-DP algorithms was unknown.  Although that work essentially resolved the optimal accuracy of node-DP algorithms, their construction is again based on generic Lipschitz extensions, and thus results in an exponential-time algorithm, and, in our opinion, gives little insight for how to construct an efficient estimator with similar accuracy.

The main contribution of this work is to give a simple, polynomial-time estimator for \ER~graphs whose error very nearly matches that of Borgs \etal's estimator, and indeed matches it in a wide range of parameters.  We achieve this by giving a more general result, showing how to optimally estimate the edge-density of any graph whose degree distribution is concentrated in a small interval.

\subsection{Background: Node-Private Algorithms for \ER~Graphs}

Without privacy, the optimal estimator is simply to output the \emph{edge-density} $p_{G} = |E|/\binom{n}{2}$ of the realized graph $G \sim G(n,p)$, which guarantees
$$
\ex{G}{(p - p_{G})^2} = \frac{p(1-p)}{\binom{n}{2}}.
$$
The simplest way to achieve $\eps$-node-DP is to add zero-mean noise to the edge-density with standard-deviation calibrated to its \emph{global-sensitivity}, which is the amount that changing the neighborhood of a single node in a graph can change its edge-density.  The global sensitivity of $p_{G}$ is $\Theta(1/n)$, and thus the resulting private algorithm $\cA_{\textit{na\"{i}ve}}$ satisfies
$$
\ex{G}{(p - \cA_{\textit{na\"{i}ve}}(G))^2} = \Theta\left( \frac{1}{\eps^2 n^2} \right)
$$
Note that this error is \emph{at least} on the same order as the non-private error, and can asymptotically dominate the non-private error.

Borgs \etal~\cite{BorgsCSZ18} gave an improved $\eps$-node-DP algorithm such that, when both $p$ and $\eps$ are $\gtrsim \log(n)/n$,
$$
\ex{}{(p - \cA_{\textit{bcsz}}(G))^2} = \underbrace{\frac{p(1-p)}{\binom{n}{2}}}_{\textrm{non-private error}} +~\underbrace{\tilde{O}\left( \frac{p}{\eps^2 n^{3}} \right)}_{\textrm{overhead due to privacy}}
$$
What is remarkable about their algorithm is that, unless $\eps$ is quite small (roughly $\eps \lesssim n^{-1/2}$), the first term dominates the error, in which case privacy comes essentially \emph{for free}.  That is, the error of the private algorithm is only larger than that of the optimal non-private algorithm by a $1+o(1)$ factor.  However, as we discussed above, this algorithm is not computationally efficient.

The only computationally efficient node-DP algorithms for computing the edge-density apply to graphs with small maximum degree~\cite{BlockiBDS13,KasiviswanathanNRS13,RaskhodnikovaS16}, and thus do not give optimal estimators for \ER~graphs unless $p$ is very small.

\subsection{Our Results}

Our main result is a computationally efficient estimator for \ER~graphs.

\jnote{Need to be careful about what the algorithm depends on.  I think either it needs to be given some bound on $p$ or else it will be a little weaker.}

\begin{thm}[\ER~Graphs, Informal] \label{thm:main-er}
There is an $O(n^2)$-time $\eps$-node-DP algorithm $\cA$ such that for every $n$ and every $p \gtrsim 1/n$ if $G \sim G(n,p)$ then
$$
\ex{G,A}{(p - \cA(G))^2} = \underbrace{\frac{p(1-p)}{\binom{n}{2}}}_{\textrm{non-private error}} +~~\underbrace{\tilde{O}\left( \frac{p}{\eps^2 n^{3}} + \frac{1}{\eps^4 n^4} \right)}_{\textrm{overhead due to privacy}}
$$
\end{thm}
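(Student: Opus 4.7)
The algorithm has two stages. In stage 1, spend budget $\epsilon/2$ on the naive Laplace mechanism, releasing the preliminary estimate $\tilde p := p_G + \Lap(2/(\epsilon n))/\binom{n}{2}$. Since $p_G$ has global node-sensitivity $O(1/n)$, this is $(\epsilon/2)$-node-DP and $|\tilde p - p| = \tilde O(1/(\epsilon n))$ with high probability. In stage 2, use $\tilde p$ to set a degree threshold $\tau := (n-1)\tilde p + C\sqrt{n\tilde p\, \log n}$, chosen so that by Chernoff, every node of $G \sim G(n,p)$ has degree at most $\tau$ with high probability, i.e.\ $G \in \GBoun{\tau}$. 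Then compute an efficient Lipschitz extension $\val_\tau$ of $|E(\cdot)|$ from $\GBoun{\tau}$ to all graphs using known polynomial-time $b$-matching / LP-based constructions, and release $\hat p := (\val_\tau(G) + \Lap(\Delta / (\epsilon/2)))/\binom{n}{2}$ for an appropriate effective sensitivity $\Delta$. Privacy follows by basic composition of the two $(\epsilon/2)$-DP releases.

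For utility, I condition on the high-probability event $G \in \GBoun{\tau}$, on which $\val_\tau(G) = |E|$ and $\hat p$ is therefore unbiased. The squared error then decomposes into: (i) the sampling variance $p(1-p)/\binom{n}{2}$ of the non-private edge density; (ii) the Laplace-noise variance $\Theta(\Delta^2/(\epsilon^2 n^4))$; and (iii) a second-order contribution $\tilde O(1/(\epsilon^4 n^4))$ from the randomness of $\tilde p$ propagating into $\tau$ (and hence into $\Delta$), essentially coming from the $1/(\epsilon n)$ accuracy of the stage-1 estimator fed through an $O(1/n)$-sensitivity step. The complementary low-probability event $G \notin \GBoun{\tau}$ contributes negligibly because $\val_\tau(G)/\binom{n}{2} \in [0,1]$ and the event has inverse-polynomial probability by standard degree-concentration bounds for $G(n,p)$.

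The heart of the argument is achieving an effective sensitivity of $\Delta = \tilde O(\sqrt{np})$, matching the degree \emph{fluctuation} scale rather than the degree mean. The direct sensitivity of the $b$-matching Lipschitz extension on $\GBoun{\tau}$ is $O(\tau) = O(np)$, which would yield only the suboptimal bound $\tilde O(p^2/(\epsilon^2 n^2))$---worse than the target by a factor of roughly $np$. To close this gap, I plan to work with a ``recentered'' extension whose sensitivity scales with $\sqrt{np}$: combining the $\GBoun{\tau}$ extension with a clipping step around the baseline $\tilde p \binom{n}{2}$ at the typical fluctuation scale, and arguing that the additional bias so introduced is dominated by the sampling variance on the high-probability concentration event. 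The main obstacles are (a) making this refinement rigorous---in particular, controlling the bias of the clipped extension using Chernoff bounds on $|E|$ under $G(n,p)$ together with the stage-1 accuracy of $\tilde p$---and (b) tracking how the adaptive dependence of $\tau$ on $\tilde p$ interacts with the sensitivity bound, which is what produces the $1/(\epsilon^4 n^4)$ term in the theorem. The generality of the result, over all graphs with concentrated degree distributions, should fall out of stating the stage-2 analysis in terms of the width of the degree concentration interval rather than the specific $G(n,p)$ structure.
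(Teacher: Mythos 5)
Your two-stage framework—spend $\eps/2$ on a preliminary Laplace estimate $\tilde p$, then use $\tilde p$ to tune a degree-threshold for a second, more refined estimator—is exactly the outer structure of the paper's Algorithm~\ref{alg:er}, and your accounting of how the stage-1 error propagates into a lower-order $\tilde O(1/(\eps^4 n^4))$ term is also on the right track. The gap is in the ``heart of the argument,'' as you put it yourself: the clipping device you propose cannot achieve the effective sensitivity $\Delta = \tilde O(\sqrt{np})$ that the theorem requires.

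The difficulty is a scale mismatch between degree fluctuations and edge-count fluctuations. Under $G(n,p)$, individual degrees concentrate to width $\tilde O(\sqrt{np})$ around $\bar d = (n-1)p$, but the total edge count $|E|$ has standard deviation $\sqrt{p(1-p)\binom{n}{2}} = \Theta(n\sqrt{p})$, and in addition $\tilde p\binom{n}{2}$ is off from $p\binom{n}{2}$ by $\tilde O(n/\eps)$ because $|\tilde p - p| = \tilde O(1/(\eps n))$. So for the clipped quantity to agree with $|E|$ with high probability—which your utility argument requires—the clipping window around $\tilde p\binom{n}{2}$ must have width $w = \tilde\Omega(n\sqrt{p} + n/\eps)$, not $\tilde O(\sqrt{np})$. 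With that window width, the (global or smooth) sensitivity of the clipped function is $\Theta(w) = \tilde\Omega(n\sqrt{p})$, and the resulting squared error in the $p$-estimate is $\tilde\Omega(p/(\eps^2 n^2))$: a full factor of $n$ worse than the target $\tilde O(p/(\eps^2 n^3))$. Shrinking $w$ to $\sqrt{np}$ eliminates the noise problem but then the clipped value is essentially never $|E|$, so the estimator is biased by $\Theta(n\sqrt p)$ in the edge count—same penalty. Clipping trades noise for bias along a single one-parameter curve that never touches the optimal point, so this is not a matter of ``making the refinement rigorous.''

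The paper sidesteps this by never clipping the output at all. Instead (Section~3), it builds a surrogate $f(G)$ for $|E|$ that \emph{equals} $|E|$ whenever the degrees are $k$-concentrated—so there is zero bias on the event you condition on—while achieving \emph{local} sensitivity $O((k_G+k^*)(1+\beta k_G) + 1/\beta)$ by softly reweighting, per edge, the contribution of any vertex whose degree lies outside $[\bar d_G - k', \bar d_G + k']$, replacing that edge's indicator with the empirical density $p_G$. The sensitivity depends on the degree-concentration width $k$, not on $n\sqrt p$, precisely because a single node change shifts every other degree by at most $1$ and shifts $\bar d_G$ by at most $2$, so it changes the set of ``outlier'' vertices and their weights only slightly. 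That local sensitivity is then handled with a $\beta$-smooth upper bound and Student's-$t$ smooth-sensitivity noise (Lemma~\ref{lem:alg-ss}, Theorem~\ref{thm:ss}), not a global-sensitivity Laplace mechanism. So to close your gap you would need to replace the $b$-matching extension plus clipping with a construction whose sensitivity genuinely tracks the degree-concentration width, along the lines of the paper's reweighted $f$.
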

The error of Theorem~\ref{thm:main-er} matches that of the exponential-time estimator of Borgs \etal~\cite{BorgsCSZ18} up to the additive $\tilde{O}(1/\eps^4 n^4)$ term, which is often not the dominant term in the overall error.  In particular, the error of our estimator is still within a 1+o(1) factor of the optimal non-private error unless $\eps$ or $p$ is quite small---for example, when $p$ is a constant and $\eps \gtrsim n^{-1/2}$.  

\medskip
Our estimator actually approximates the edge density for a significantly more general class of graphs than merely \ER~graphs.  Specifically, Theorem~\ref{thm:main-er} follows from a more general result for the family of \emph{concentrated-degree graphs}.  For $k \in \N$, define $\cG_{n,k}$ to be the set of $n$-node graphs such that the degree of every node is between $\ad - k$ and $\ad + k$, where $\ad = 2 |E| / n$ is the average degree of the graph.

\begin{thm}[Concentrated-Degree Graphs, Informal] \label{thm:main-cd}
For every $k \in \N$, there is an $O(n^2)$-time $\eps$-node-DP algorithm $\cA$ such that for every $n$ and every $G \in \cG_{n,k}$,
$$
\ex{\cA}{\left(p_{G} - \cA(G)\right)^2} = O\left( \frac{k^2}{\eps^2 n^4} + \frac{1}{\eps^4 n^4} \right)
$$
where $p_{G} = |E|/\binom{n}{2}$ is the empirical edge density of $G$.  
\end{thm}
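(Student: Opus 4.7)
The plan is to apply the Laplace mechanism to an efficiently computable Lipschitz extension of the edge-count function whose global sensitivity scales with $k$ rather than with the maximum degree. The starting point is a \emph{restricted sensitivity} calculation on $\cG_{n,k}$: for any two node-neighboring $G, G' \in \cG_{n,k}$ differing only on the neighborhood of a node $v$, let $\delta = \deg_{G'}(v) - \deg_G(v)$, so that $|E(G')| - |E(G)| = \delta$ and the average degrees shift by $\bar d_{G'} - \bar d_G = 2\delta/n$. The concentration condition applied to both graphs gives $\deg_G(v) - \bar d_G \in [-k, k]$ and $\deg_{G'}(v) - \bar d_{G'} \in [-k, k]$; subtracting these two constraints yields $\delta(1 - 2/n) \in [-2k, 2k]$, so $|\delta| = O(k)$. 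Hence the restricted sensitivity of $|E|$ on $\cG_{n,k}$ is $O(k)$, independent of the density.

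To convert this into a node-DP algorithm defined on all graphs, I would construct an efficient Lipschitz extension $\tilde f$ of $|E|$ satisfying (i)~$\tilde f(G) = |E(G)|$ for $G \in \cG_{n,k}$, (ii)~$\tilde f$ has global node-DP sensitivity $O(k)$, and (iii)~$\tilde f$ is computable in $O(n^2)$ time. A natural candidate is $\tilde f = |E| \circ \pi$, where $\pi$ first privately estimates a reference degree $\hat d$ and then locally adjusts the edges incident to nodes whose degree is far from $\hat d$ until the concentration condition is satisfied, designed so that modifying a single node of $G$ changes $\pi(G)$ by only $O(1)$ nodes. An information-theoretically cleaner but computationally nontrivial alternative is the McShane--Whitney envelope $\min_{H \in \cG_{n,k}} ( |E(H)| + C k \cdot d_{\text{node}}(G,H) )$, which is $O(k)$-Lipschitz by construction; an efficient implementation would require exploiting the combinatorial structure of $\cG_{n,k}$.

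Given such an extension, the algorithm releases $\cA(G) = (\tilde f(G) + Z)/\binom{n}{2}$ with $Z \sim \Lap(O(k/\eps))$, which is $\eps$-node-DP by the Laplace mechanism and satisfies $\ex{\cA}{(p_G - \cA(G))^2} = O(k^2/\eps^2 n^4)$ for $G \in \cG_{n,k}$. The additive $O(1/\eps^4 n^4)$ term in the bound plausibly arises from the preliminary private estimation of the reference degree used to instantiate the projection: adding Laplace noise of scale $O(1/\eps)$ to a quantity such as $\bar d$ contributes $O(1/\eps^2)$ to the squared error in the effective edge count, which becomes $O(1/\eps^4 n^4)$ after normalization by $\binom{n}{2}^2$.

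The main obstacle is the Lipschitz extension. Standard efficient extensions for bounded-degree graphs (based on flows or LP relaxations) yield sensitivity $\Theta(\bar d + k)$, which is too large when $\bar d \gg k$ and would make the additive error depend on $p$, ruining the desired bound. The key insight from Step~1 is that sensitivity on $\cG_{n,k}$ is really $O(k)$ because shifts in $\bar d$ must compensate for shifts in $\deg(v)$, but lifting this restricted sensitivity to a global extension that is simultaneously $O(k)$-sensitive, equal to $|E|$ on $\cG_{n,k}$, and computable in $O(n^2)$ time is the core technical challenge, and I expect the construction to hinge on a carefully designed projection driven by a single privately-released reference degree.
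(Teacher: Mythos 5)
Your restricted-sensitivity calculation is correct and identifies the right starting point: on $\cG_{n,k}$ the edge count has node sensitivity $O(k)$ because a shift in $\deg(v)$ must be matched by a shift in $\bar{d}$. But the paper does \emph{not} take the route you propose of building an $O(k)$-Lipschitz extension of $|E|$ and adding Laplace noise, and the gap in your proposal is exactly the step you flag as the ``core technical challenge''---you do not construct the extension, and the sketch you give would not obviously work. In your projection $\pi$ driven by a privately-released reference degree $\hat d$, the claim that changing one node of $G$ changes $\pi(G)$ in only $O(1)$ nodes is the entire difficulty and is left unproven; a single node change shifts every other degree's distance to $\bar d$ (and to $\hat d$ after re-estimation), so a hard thresholding rule can move many nodes in or out of the ``far'' set at once. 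The McShane--Whitney envelope $\min_{H \in \cG_{n,k}}(|E(H)| + Ck\,d_{\mathrm{node}}(G,H))$ is Lipschitz by fiat, but as you note there is no efficient algorithm given, so this does not yield the $O(n^2)$-time claim.

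The paper avoids the Lipschitz-extension problem entirely by switching frameworks: instead of a globally $O(k)$-sensitive function with Laplace noise, it constructs a reweighted edge count $f(G) = \sum_{\{u,v\}} \val_G(\{u,v\})$ where the \emph{local} sensitivity $\LS_f(G)$ is bounded in terms of a data-dependent outlier count $k_G$ (Lemma~\ref{lem:LS_bound}), and it releases $f(G)$ with noise proportional to a $\beta$-smooth upper bound $S(G)$ on $\LS_f$ using the Student's $t$ mechanism (Theorem~\ref{thm:ss}), not Laplace. Two design choices make this go through where a naive projection would not: the vertex weights $\wt_G(v) = \max(0, 1 - \beta t_v)$ degrade \emph{gradually} with distance $t_v$ from the interval $I_G$ (so a small perturbation cannot flip a weight from $1$ to $0$), and the interval half-width $k^* + 3k_G$ grows with $k_G$ itself, which ensures $|k_G - k_{G'}| \le 1$ for neighbors and hence that $S$ is $\beta$-smooth (Lemma~\ref{lem:alg-ss}). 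Your account of where the $1/\eps^4 n^4$ term comes from is also incorrect for Theorem~\ref{thm:main-cd}: there is no preliminary private estimate of a reference degree in that theorem (that only appears in Algorithm~\ref{alg:er} for the \ER\ application). The $1/\eps^4 n^4$ term arises because the smooth upper bound is $S(G) = O(k + 1/\beta)$ with $\beta = \min(\eps, 1/\sqrt{k})$, so the noise scale is $S(G)/\eps = O(k/\eps + 1/\eps^2)$, and after normalizing by $\binom{n}{2}$ the squared error picks up $O(1/\eps^4 n^4)$. In short, your diagnosis of the obstacle is right, but the paper's solution is a smooth-sensitivity construction, not a Lipschitz extension, and your proposal does not fill the gap.
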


Theorem~\ref{thm:main-er} follows from Theorem~\ref{thm:main-cd} by using the fact that for an \ER~graph, with overwhelming probability the degree of every node lies in an interval of width $\tilde{O}(\sqrt{pn})$ around the average degree.

The main technical ingredient in Theorem~\ref{thm:main-cd} is to construct a \emph{low sensitivity} estimator $f(G)$ for the number of edges.  The first property we need is that when $G$ satisfies the concentrated degrees property, $f(G)$ equals the number of edges in $G$.  The second property of the estimator we construct is that its \emph{smooth sensitivity}~\cite{NissimRS07} is low on these graphs $G$.  At a high level, the smooth sensitivity of $f$ at a graph $G$ is the most that changing the neighborhood of a small number of nodes in $G$ can change the value of $f(G)$.  Once we have this property, it is sufficient to add noise to $f(G)$ calibrated to its smooth sensitivity.  We construct $f$  by carefully reweighting edges that are incident on nodes that do not satisfy the concentrated-degree condition.

\medskip
Finally, we are able to show that Theorem~\ref{thm:main-cd} is optimal for concentrated-degree graphs.  In additional to being a natural class of graphs in its own right, this lower bound demonstrates that in order to improve Theorem~\ref{thm:main-er} we will need techniques that are more specialized to \ER~graphs.

\begin{thm}[Lower Bound, Informal] \label{thm:main-lb}
For every $n$ and $k$, and every $\eps$-node-DP algorithm $A$, there is some $G \in \cG_{n,k}$ such that
$$
\ex{A}{\left(p_{G} - \cA(G)\right)^2} = \Omega\left( \frac{k^2}{\eps^2 n^4} + \frac{1}{\eps^4 n^4} \right)
$$
The same bound applies to $(\eps,\delta)$-node-DP algorithms with sufficiently small $\delta \lesssim \eps$.
\end{thm}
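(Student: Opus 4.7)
The plan is to derive the lower bound via the two-point (Le Cam) method combined with group privacy for node-DP. The core reduction says: if $G_0, G_1 \in \cG_{n,k}$ are at node-Hamming distance $T$ and $\eps T = O(1)$, then any $\eps$-node-DP mechanism $A$ satisfies
\begin{equation*}
\max_{i\in\{0,1\}} \mathbb{E}\bigl[(A(G_i) - p_{G_i})^2\bigr] \;=\; \Omega\bigl((p_{G_0}-p_{G_1})^2\bigr),
\end{equation*}
by combining group privacy (which yields $\mathrm{TV}(A(G_0),A(G_1)) \le 1 - \exp(-\eps T)$) with Le Cam's two-point bound. So the task reduces to exhibiting, for each $(n,k,\eps)$, two graphs in $\cG_{n,k}$ at node-distance $O(1/\eps)$ whose edge densities are separated as much as possible.

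I would take $T = \lceil 1/\eps \rceil$, pick $V_1 \subset V$ of size $T$, and start from a near-regular base graph $G_0 \in \cG_{n,k}$ with all degrees equal to some $d_0$. $G_1$ is formed by modifying only the neighborhoods of the $T$ nodes in $V_1$. For the $k^2/(\eps^2 n^4)$ term, I add $k$ fresh cross edges per $V_1$-node, distributed evenly so that each $V_2$-node gains about $kT/n = O(k/(\eps n))$ neighbors. A short calculation confirms that every node of $G_1$ stays within $\pm k$ of the shifted average degree (the $V_1$-shift is $\approx k$, and the $V_2$-shift is $\approx k/(\eps n) \ll k$), so $G_1 \in \cG_{n,k}$; the resulting density gap is $\Theta(k/(\eps n^2))$, yielding $\Omega(k^2/(\eps^2 n^4))$. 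For the $1/(\eps^4 n^4)$ term in the regime $k \ge T$, I instead fill a clique on $V_1$: each $V_1$-node gains degree $T-1 \le k$, concentration is preserved, and the density gap $\Theta(1/(\eps^2 n^2))$ gives $\Omega(1/(\eps^4 n^4))$.

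The main obstacle is closing the $1/(\eps^4 n^4)$ lower bound in the regime $k < 1/\eps$: the clique construction violates concentration there, and one can verify that any two-point argument at distance $O(1/\eps)$ is capped at an edge-density gap of $O(k/(\eps n^2))$, so the two-point method alone only yields $\Omega(k^2/(\eps^2 n^4))$ in this range. Bridging this gap likely requires a separate packing or fingerprinting-style argument, using a family of well-separated graphs in $\cG_{n,k}$ at bounded pairwise node-distance together with a Fano- or score-attack-type inequality adapted to node-DP; I would carry this out as a subsidiary lemma.

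For the $(\eps,\delta)$-DP extension, I replace pure group privacy by its approximate-DP version, $\mathrm{TV}(A(G_0),A(G_1)) \le (1 - e^{-\eps T}) + T\delta e^{\eps T}$, which for $T = O(1/\eps)$ and $\delta \lesssim \eps$ is still bounded away from $1$, so Le Cam's method carries through and yields the same lower bound with no change in the exponent.
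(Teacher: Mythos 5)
Your high-level framing---two-point Le Cam combined with group privacy---is exactly the paper's approach (its Lemma~\ref{lem:locallb}), and your treatment of the $(\eps,\delta)$ extension and the $k^2/(\eps^2 n^4)$ term are both sound and in the same spirit as the paper (the paper uses the empty graph vs.\ a sparse bipartite gadget rather than a near-regular base, but the two work out to the same bound).

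The genuine gap is your treatment of the $1/(\eps^4 n^4)$ term when $k < 1/\eps$. You assert that ``any two-point argument at distance $O(1/\eps)$ is capped at an edge-density gap of $O(k/(\eps n^2))$'' and defer to an unstated packing or Fano-style argument. That assertion is false, and the paper closes the case with a clean two-point construction (its Lemma~\ref{lem:smallk}). The hidden false assumption in your cap is that the $\pm k$ concentration constraint forces $G_0$ and $G_1$ to have nearly the same average degree, so that each of the $O(1/\eps)$ modified nodes can only shift the edge count by $O(k)$. But concentration is a constraint about degrees \emph{relative to each graph's own average}; the two endpoints can have very different average degrees, and a single node modification can change $\Theta(n)$ edges in a dense graph. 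The paper's construction takes $G_0$ to be $i = \lceil n\eps\rceil$ disjoint cliques of size $\approx n/i$ and $G_1$ to be $i+1$ disjoint cliques of size $\approx n/(i+1)$. Both are near-regular, hence in $\cG_{n,1}$, and therefore in $\cG_{n,k}$ for every $k\ge 1$. One obtains $G_0$ from $G_1$ by dissolving a single clique of size $\ell \approx 1/\eps$ and redistributing its vertices, which is $\ell$ node-rewirings, so the node distance is $\le 1/\eps$. The edge-density gap is $\Theta\bigl(\ell^2/n^2\bigr) = \Theta\bigl(1/(\eps^2 n^2)\bigr)$, giving $\Omega(1/(\eps^4 n^4))$ for every $k\ge 1$ with no auxiliary argument. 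You should replace the deferred packing/Fano lemma with this construction---it is both simpler and already sufficient.
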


\section{Preliminaries}

Let $\cG_{n}$ be the set of $n$-node graphs.  We say that two graphs $G,G' \in \cG_{n}$ are \emph{node-adjacent}, denoted $G \sim G'$, if $G'$ can be obtained by $G$ modifying the neighborhood of a single node $i$. That is, there exists a single node $i$ such that for every edge $e$ in the symmetric difference of $G$ and $G'$, $e$ is incident on $i$.  As is standard in the literature on differential privacy, we treat $n$ as a fixed quantity and define adjacency only for graphs with the same number of nodes.  We could easily extend our definition of adjacency to include adding or deleting a single node itself. 

\begin{definition}[Differential Privacy~\cite{DworkMNS06}]
A randomized algorithm $\cA \from \cG_{n} \to \cR$ is $(\eps,\delta)$-node-differentially private if for every $G \sim G' \in \cG_{n}$ and every $R \subseteq \cR$,
$$
\pr{}{\cA(G) \in R} \leq e^{\eps} \cdot \pr{}{\cA(G') \in R} + \delta
$$
If $\delta = 0$ we will simply say that $\cA$ is $\eps$-node-differentially private.  As we only consider node differential privacy in this work, we will frequently simply say that $\cA$ satisfies differential privacy.
\end{definition}

The next lemma is the basic composition property of differential privacy.
\begin{lem}[Composition~\cite{DworkMNS06}] \label{lem:dpcomp} If $\cA_1, \cA_2 \from \cG_{n} \to \cR$ are each $(\eps,\delta)$-node-differentially private algorithms, then the mechanism $\cA(G) = (\cA_1(G), \cA_2(G))$ satisfies $(2 \eps,2\delta)$-node-differential privacy.  The same holds if $\cA_2$ may depend on the output of $\cA_1$.
\end{lem}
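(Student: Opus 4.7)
The plan is a two-step conditioning argument. The cleanest route that gives the claimed exactly additive $2\delta$ (rather than the $(1+e^\eps)\delta$ one gets from a naive direct argument) is via the ``bad event'' / max-divergence characterization of approximate DP. Recall the equivalence (see e.g.\ Dwork-Roth, Lemma 3.17): $(\eps, \delta)$-DP of a mechanism $\cB$ is equivalent to saying that for every adjacent pair $G \sim G'$, there exists a distribution $\widetilde{P}$ on the output space with $d_{TV}(\cB(G), \widetilde{P}) \leq \delta$ and $\widetilde{P}(S) \leq e^\eps \Pr[\cB(G') \in S]$ for every measurable $S$. Intuitively, $\cB(G)$ is pure $\eps$-DP-close to $\cB(G')$ except on a ``bad event'' of mass at most $\delta$.

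I would fix adjacent $G \sim G'$ in $\cG_n$ and apply this to each constituent mechanism. DP of $\cA_1$ yields a distribution $\widetilde{P}_1$ within TV distance $\delta$ of $\cA_1(G)$ with $\widetilde{P}_1(S) \leq e^\eps \Pr[\cA_1(G') \in S]$. For each possible output $r_1$ of $\cA_1$, DP of $\cA_2$ yields $\widetilde{P}_2^{r_1}$ within TV distance $\delta$ of $\cA_2(G; r_1)$ and satisfying $\widetilde{P}_2^{r_1}(T) \leq e^\eps \Pr[\cA_2(G'; r_1) \in T]$. Writing $\cA_2(G; r_1)$ emphasizes that the distribution of $\cA_2$ is allowed to depend on the output $r_1$ of $\cA_1$, which is the adaptive extension in the lemma. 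Now define the joint distribution $\widetilde{P}$ on $\cR \times \cR$ by drawing $r_1 \sim \widetilde{P}_1$ and then $r_2 \sim \widetilde{P}_2^{r_1}$.

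For any measurable $R \subseteq \cR \times \cR$, decomposing by slices $R_{r_1} = \{r_2 : (r_1, r_2) \in R\}$ and multiplying the two pointwise $e^\eps$-bounds gives $\widetilde{P}(R) \leq e^{2\eps} \Pr[\cA(G') \in R]$. Simultaneously, $\widetilde{P}$ is within TV distance $2\delta$ of $\cA(G)$: coupling $r_1 \sim \cA_1(G)$ to $\widetilde{P}_1$ costs at most $\delta$, coupling $r_2 \sim \cA_2(G; r_1)$ to $\widetilde{P}_2^{r_1}$ pointwise in $r_1$ costs at most another $\delta$, and the TV triangle inequality yields $2\delta$. Combining, $\Pr[\cA(G) \in R] \leq \widetilde{P}(R) + 2\delta \leq e^{2\eps} \Pr[\cA(G') \in R] + 2\delta$, which is $(2\eps, 2\delta)$-DP.

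The main obstacle is really only bookkeeping---verifying that the pointwise-in-$r_1$ coupling of the second coordinate does yield a valid TV bound on the joint, which follows by conditioning on $r_1$ and using that TV is non-increasing under averaging. The one nontrivial ingredient is invoking the max-divergence characterization of $(\eps,\delta)$-DP to preserve a truly additive $\delta$ rather than a multiplicative one. For pure DP ($\delta = 0$) the whole argument collapses to the one-line chain $\Pr[\cA(G) = (r_1, r_2)] = \Pr[\cA_1(G) = r_1] \cdot \Pr[\cA_2(G; r_1) = r_2]$, to which $(\eps,0)$-DP applies in each factor.
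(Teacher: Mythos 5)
The paper states Lemma~\ref{lem:dpcomp} without proof, citing it as a standard result from the differential privacy literature~\cite{DworkMNS06}; there is no in-paper argument to compare against. Your proposed proof is correct, and it is essentially the textbook derivation of adaptive composition via the max-divergence (``bad event'') characterization of $(\eps,\delta)$-DP, as in Dwork--Roth. Two small points worth flagging. First, you are right that this route is what preserves the truly additive $2\delta$: the naive chaining argument (apply the DP inequality for $\cA_2$ pointwise, then integrate against $\cA_1(G)$ and apply the DP inequality for $\cA_1$ using a layer-cake bound on the $[0,1]$-valued integrand) gives $(2\eps, (1+e^\eps)\delta)$, which is strictly weaker whenever $\eps > 0$. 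Second, the only bookkeeping items you should make explicit are (a) that the family $\{\widetilde{P}_2^{r_1}\}_{r_1}$ can be chosen measurably in $r_1$ so that the joint $\widetilde{P}$ is a legitimate probability measure, and (b) the identity that total variation of two joints with a common Markov kernel equals total variation of the first marginals, which is what licenses the claim that coupling the first coordinate ``costs at most $\delta$.'' Both are routine, and the latter is exactly the ``TV is non-increasing under averaging'' fact you invoke. With those in place, the computation $\pr{}{\cA(G)\in R} \leq \widetilde{P}(R) + 2\delta \leq e^{2\eps}\pr{}{\cA(G')\in R} + 2\delta$ is airtight, and the $\delta = 0$ specialization reduces to the one-line pure-DP product argument as you note.
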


We say that two graphs $G,G'$ are at \emph{node distance $c$} if there exists a sequence of graphs
$$
G = G_0 \sim G_1 \dots G_{c-1} \sim \dots G_{c} = G'
$$
The standard \emph{group privacy} property of differential privacy yields the following guarantees for graphs at node distance $c > 1$.
\begin{lem}[Group Privacy~\cite{DworkMNS06}] \label{lem:groupdp} If $\cA \from \cG_n \to \cR$ is $(\eps,\delta)$-node-differentially-private and $G,G'$ are at node-distance $c$ then for every $R \subseteq \cR$,
$$
\pr{}{\cA(G) \in R} \leq e^{c \eps}  \pr{}{\cA(G') \in R} + c e^{c \eps} \delta
$$
\end{lem}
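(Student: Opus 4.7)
The plan is to prove this by induction on the node distance $c$, since the statement for $c=1$ is exactly the definition of $(\eps,\delta)$-node-differential privacy. This is the standard ``group privacy'' unrolling and there is no real obstacle beyond carefully bookkeeping the $\delta$ terms; the main thing to watch is that the accumulated additive error remains bounded by $c e^{c\eps}\delta$ rather than something worse like $c e^{c\eps}\delta \cdot c$.

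\smallskip
\noindent\textbf{Base case.} When $c=1$, the graphs $G,G'$ are node-adjacent, and the bound $\Pr[\cA(G)\in R]\leq e^{\eps}\Pr[\cA(G')\in R]+\delta$ is exactly the privacy guarantee of $\cA$.

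\smallskip
\noindent\textbf{Inductive step.} Assume the claim holds for all pairs at node distance $c-1$, and let $G=G_0\sim G_1\sim\cdots\sim G_c=G'$ witness that $G$ and $G'$ are at node distance $c$. Then $G_0$ and $G_{c-1}$ are at node distance $c-1$, while $G_{c-1}\sim G_c$. Applying the inductive hypothesis to the pair $(G_0,G_{c-1})$ and the base case to the pair $(G_{c-1},G_c)$, I would chain the two inequalities: first bound $\Pr[\cA(G)\in R]$ in terms of $\Pr[\cA(G_{c-1})\in R]$, then substitute the single-step bound on $\Pr[\cA(G_{c-1})\in R]$ in terms of $\Pr[\cA(G')\in R]$. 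The multiplicative factors combine as $e^{(c-1)\eps}\cdot e^{\eps}=e^{c\eps}$, and the additive contributions are $(c-1)e^{(c-1)\eps}\delta$ from the inductive step together with $e^{(c-1)\eps}\delta$ from the final single-step application, summing to $c\,e^{(c-1)\eps}\delta \leq c\,e^{c\eps}\delta$.

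\smallskip
\noindent\textbf{Remarks on the approach.} The only subtle point is that we are free to choose \emph{any} witnessing path $G_0\sim G_1\sim\cdots\sim G_c$; the bound holds for every such sequence and therefore in particular when the node-distance is $c$. Note that the chaining above loses a factor of $e^{\eps}$ per step in the additive term, so the bound $ce^{c\eps}\delta$ is somewhat loose (a tighter $ce^{(c-1)\eps}\delta$ falls out directly), but the form stated in the lemma is the conventional one. No other ingredients are needed: the proof uses only the definition of $(\eps,\delta)$-node-DP and straightforward induction.
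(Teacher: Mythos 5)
Your inductive argument is correct, and the bookkeeping of the additive terms is accurate: unrolling one step at a time gives a total of $ce^{(c-1)\eps}\delta$, which is indeed bounded by the stated $ce^{c\eps}\delta$. The paper states this lemma as a cited standard fact from Dwork et al.\ and does not include a proof, so there is no paper proof to compare against; yours is the standard group-privacy argument and is fine as written (one could phrase the induction hypothesis as applying to pairs at node distance \emph{at most} $c-1$ to be fully precise about the subpath $G_0,\dots,G_{c-1}$, but this is cosmetic since the bound is monotone in $c$).
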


\newcommand{\LS}{\mathit{LS}}
\newcommand{\GS}{\mathit{GS}}

\mypar{Sensitivity and Basic DP Mechanisms.} The main differentially private primitive we will use is \emph{smooth sensitivity}~\cite{NissimRS07}.  Let $f \from \cG_{n} \to \R$ be a real-valued function.  For a graph $G \in \cG_{n}$, we can define the \emph{local sensitivity} of $f$ at $G$ to be
$$
\LS_{f}(G) = \max_{G': G' \sim G} | f(G) - f(G') |
$$
and the \emph{global sensitivity} of $f$ to be
$$
\GS_{f} = \max_{G} \LS_{f}(G) = \max_{G' \sim G} | f(G) - f(G') |
$$

A basic result in differential privacy says that we can achieve privacy for any real-valued function $f$ by adding noise calibrated to the global sensitivity of $f$.
\begin{thm} [DP via Global Sensitivity~\cite{DworkMNS06}] \label{thm:gs}
Let $f : \cG_{n} \to \R$ be any function.  Then the algorithm
$$
\cA(G) = f(G) + \frac{\mathit{GS}_{f}}{\eps} \cdot Z,
$$
where $Z$ is sampled from a standard Laplace distribution, satisfies $(\eps,0)$-differential privacy.\footnote{The standard Laplace distribution $Z$ has $\ex{}{Z} = 0, \ex{}{Z^2} = 2$, and density $\mu(z) \propto e^{-|z|}$.}
Moreover, this mechanism satisfies $\ex{\cA}{(\cA(G) - f(G))^2} = O(\GS_{f}/\eps)$, and for all $t>0$ we have that
$$\pr{\cA}{|\cA(G) - f(G)| \geq t \cdot \GS_{f} / \eps} \leq \exp(-t).$$
\end{thm}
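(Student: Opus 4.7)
The plan is to establish the three claims of the theorem separately: the $\eps$-DP guarantee, the second-moment utility bound, and the exponential tail bound. All three reduce to elementary facts about the Laplace density combined with the definition of $\GS_f$, so the argument is entirely self-contained.

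For the privacy claim, I would fix an arbitrary pair of neighboring graphs $G \sim G'$ and compare the output densities pointwise. Writing $\sigma = \GS_f/\eps$, the density of $\cA(G)$ at $y \in \R$ is $p_G(y) = \frac{1}{2\sigma}\exp(-|y - f(G)|/\sigma)$, and analogously for $G'$, so
\[
\frac{p_G(y)}{p_{G'}(y)} \;=\; \exp\!\left(\frac{|y - f(G')| - |y - f(G)|}{\sigma}\right).
\]
By the reverse triangle inequality, the numerator is at most $|f(G) - f(G')|$, which is bounded by $\GS_f$ from the definition of global sensitivity, so the ratio is bounded by $e^{\eps}$ uniformly in $y$. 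Integrating this pointwise bound over any measurable $R \subseteq \R$ yields $\pr{}{\cA(G) \in R} \leq e^{\eps} \pr{}{\cA(G') \in R}$, which is the $(\eps, 0)$-DP condition.

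For the utility claims, observe that $\cA(G) - f(G) = \sigma Z$ where $Z$ is standard Laplace. The second-moment bound is immediate from $\ex{}{Z^2} = 2$, giving $\ex{\cA}{(\cA(G) - f(G))^2} = 2 \sigma^2$. The tail bound follows from the direct computation $\pr{}{|Z| \geq t} = 2 \int_t^{\infty} \tfrac{1}{2} e^{-z}\,dz = e^{-t}$, after rescaling the threshold by $\sigma$.

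The only non-mechanical step is the pointwise density comparison in the privacy argument: the triangle inequality is what converts the additive bound $|f(G) - f(G')| \leq \GS_f$ on the shift between the two Laplace densities into the required multiplicative bound $e^{\eps}$ on their ratio. Everything else is a one-line computation with the Laplace density.
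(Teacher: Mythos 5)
Your proof is correct, and since the paper does not actually prove Theorem~\ref{thm:gs} (it is quoted as a preliminary from \cite{DworkMNS07}/\cite{DworkMNS06}), there is no in-paper argument to diverge from. What you give is the canonical proof: the pointwise density-ratio bound $p_G(y)/p_{G'}(y) \le \exp(|f(G)-f(G')|/\sigma) \le e^{\eps}$ via the reverse triangle inequality, followed by direct second-moment and tail computations for the Laplace density. All three steps are sound.

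One thing your computation surfaces: you correctly derive $\ex{\cA}{(\cA(G)-f(G))^2} = 2\sigma^2 = 2(\GS_f/\eps)^2$, whereas the theorem as stated claims $O(\GS_f/\eps)$. The bound in the paper is a typographical slip for $O(\GS_f^2/\eps^2)$; the mean-squared error of the Laplace mechanism scales quadratically in $\GS_f/\eps$, as your calculation shows, and that is how the bound is used elsewhere in the paper (e.g.\ the error analysis of Algorithm~\ref{alg:cd}).
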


In many cases the global sensitivity of $f$ is too high, and we want to use a more refined mechanism that adds instance-dependent noise that is more comparable to the local sensitivity.  This can be achieved via the \emph{smooth sensitivity} framework of Nissim \etal~\cite{NissimRS07}.
\begin{definition}[Smooth Upper Bound~\cite{NissimRS07}]
Let  $f: \cG_{n} \to \R$ be a real-valued function and $\beta > 0$ be a parameter.
A function $S: \cG_{n} \to \R$ is a \emph{$\beta$-smooth upper bound on $\LS_f$}  if
\begin{enumerate}
\item for all $G \in \cG_{n}$, $S(G) \ge LS_f(G)$, and 
\item for all neighboring $G \sim G' \in \cG_{n}$, $S(G) \le e^\beta \cdot S(G')$. 
\end{enumerate}
\end{definition}

The key result in smooth sensitivity is that we can achieve differential privacy by adding noise to $f(G)$ proportional to any smooth upper bound $S(G)$.
\begin{thm}[DP via Smooth Sensitivity~\cite{NissimRS07,BunS19}] \label{thm:ss}
Let  $f: \cG_{n} \to \R$ be any function and $S$ be a $\beta$-smooth upper bound on the local sensitivity of $f$ for any $\beta \leq \eps$.  Then the algorithm 
$$\cA(G) = f(G) + \frac{S(G)}{\eps} \cdot Z,$$
where $Z$ is sampled from a Student's $t$-distribution with $3$ degrees of freedom, satisfies $(O(\eps),0)$-differential privacy.\footnote{The Student's $t$-distribution with 3 degrees of freedom can be efficiently sampled by choosing $X,Y_1,Y_2,Y_3 \sim \cN(0,1)$ independently from a standard normal and returning $Z = X / \sqrt{Y_1^2 + Y_2^2 + Y_3^2}$.  This distribution has $\ex{}{Z} = 0$ and $\ex{}{Z^2} = 3$, and its density is $\mu(z) \propto 1/(1+z^2)^2$.}
Moreover, for any $G \in \cG_{n}$, this algorithm satisfies $\ex{\cA}{(\cA(G) - f(G))^2} = O(S(G)^2/\eps^2)$.
\end{thm}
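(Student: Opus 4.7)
The plan is to verify the two conclusions separately. The utility bound is immediate: since the Student's $t$-distribution with $3$ degrees of freedom has $\ex{}{Z^2} = 3$, the definition $\cA(G) = f(G) + (S(G)/\eps)\cdot Z$ gives
$$\ex{\cA}{(\cA(G)-f(G))^2} = (S(G)/\eps)^2 \cdot \ex{}{Z^2} = O(S(G)^2/\eps^2).$$
The substantive content is the $(O(\eps),0)$-DP claim, and for this I would show that for every pair of neighbors $G \sim G'$ and every $y \in \bbR$, the output density ratio $p_G(y)/p_{G'}(y)$ is at most $e^{O(\eps)}$, which by the standard pointwise characterization implies pure differential privacy.

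To bound the density ratio, write $\sigma = S(G)/\eps$ and $\sigma' = S(G')/\eps$. Then $p_G(y) = \sigma^{-1}\mu((y-f(G))/\sigma)$ with $\mu(z) \propto (1+z^2)^{-2}$. Substituting $u = (y-f(G))/\sigma$, $r = \sigma/\sigma'$, and $\delta = (f(G)-f(G'))/\sigma'$, the ratio takes the form
$$\frac{p_G(y)}{p_{G'}(y)} = r\cdot\left(\frac{1 + (ru+\delta)^2}{1 + u^2}\right)^{\!2}.$$
The smooth-upper-bound inequality $S(G) \leq e^{\beta}S(G')$ gives $r \in [e^{-\beta},e^{\beta}] \subseteq [e^{-\eps},e^{\eps}]$, and the bound $S(G) \geq \LS_f(G)$ gives $|f(G)-f(G')| \leq S(G) = \eps\sigma$, so $|\delta| \leq \eps r = O(\eps)$.

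The only nontrivial step is bounding $(1+(ru+\delta)^2)/(1+u^2)$ uniformly in $u$. Expanding $(ru+\delta)^2 - u^2 = (r^2-1)u^2 + 2r\delta u + \delta^2$ and dividing by $1+u^2$, the scaling term contributes at most $|r^2-1| = O(\beta) = O(\eps)$, the cross term is bounded by $r|\delta|\cdot|2u|/(1+u^2) \leq r|\delta| = O(\eps)$ via the elementary inequality $|2u|/(1+u^2) \leq 1$, and the remainder is at most $\delta^2 = O(\eps^2)$. Thus the inner ratio is $1 + O(\eps)$, squaring preserves this, and multiplying by the prefactor $r = e^{O(\eps)}$ yields the claimed $e^{O(\eps)}$ bound on the density ratio.

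The main obstacle---and the reason the argument demands Student's $t$ rather than Gaussian or Laplace noise---is controlling the shift term uniformly in $u$. The inequality $|2u|/(1+u^2) \leq 1$, i.e., the polynomial decay of $\mu$, is exactly what keeps the effect of a location shift bounded for every $y$. For Gaussian or Laplace densities the log-density has unbounded derivative, so a location shift of size $\eps\sigma$ produces an unbounded multiplicative distortion at large $|y|$; the degree-$3$ Student's $t$ is just heavy-tailed enough that both the $[e^{-\eps},e^{\eps}]$ scale change and the size-$O(\eps\sigma)$ location shift each contribute only an $e^{O(\eps)}$ factor to the density ratio.
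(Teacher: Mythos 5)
The paper does not actually prove this theorem; it is stated as a black-box citation to the smooth-sensitivity framework of Nissim, Raskhodnikova, and Smith and to Bun and Steinke, who instantiate that framework with Student-$t$ noise. So your proposal is a reconstruction of the argument from those references rather than a comparison against a proof appearing in the paper. As a reconstruction it is essentially correct and does track the standard admissible-noise analysis: you reduce privacy to a pointwise density-ratio bound, split the discrepancy between $\cA(G)$ and $\cA(G')$ into a dilation (scale change by a factor in $[e^{-\beta},e^{\beta}]$, from the smoothness condition) and a sliding (shift of size at most $S(G)=\eps\sigma$, from $S \ge \LS_f$), and verify that the $t_3$ density absorbs both via its polynomial tail. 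The utility bound from $\ex{}{Z^2}=3$ is correct.

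Two small corrections worth noting. First, the prefactor in your density-ratio identity should be $1/r = \sigma'/\sigma$, not $r$: one has $p_G(y)/p_{G'}(y) = (\sigma'/\sigma)\bigl(\frac{1+(ru+\delta)^2}{1+u^2}\bigr)^2$. This is harmless because $1/r$ lies in the same interval $[e^{-\beta},e^{\beta}]$, but the sign matters if one tries to chase constants. Second, the concluding remark that Laplace noise fails because ``the log-density has unbounded derivative'' is not right for Laplace: the Laplace log-density has derivative $\pm 1$, so a bounded additive shift is perfectly fine --- indeed that is exactly why the ordinary global-sensitivity Laplace mechanism works. What breaks for Laplace (and Gaussian) is the \emph{dilation} step: $e^{-|z|}/e^{-|z|/r}$ is unbounded as $|z|\to\infty$ when $r>1$, because the local sensitivity-dependent noise scale itself changes between neighbors. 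The heavy polynomial tail of the Student-$t$ is what tames the scale change; it also happens to tame the shift through the $|2u|/(1+u^2)\le 1$ bound you invoke, but it is the dilation, not the shift, that forces the move away from Laplace.
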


\section{An Estimator for Concentrated-Degree Graphs}

In this section we describe and analyze a node-differentially-private estimator for the edge density of a concentrated-degree graph.  

\subsection{The Estimator}

In order to describe the estimator we introduce some key notation.  The input to the estimator is a graph $G = (V,E)$ and a parameter $k^*$.  Intuitively, $k^*$ should be an upper bound on the concentration parameter of the graph, although we obtain more general results when $k^*$ is not an upper bound, in case the user does not have an \emph{a priori} upper bound on this quantity.

For a graph $G=(V,E)$, let $p_G = |E|/\binom{n}{2}$ be the empirical edge density of $G$, 
and let $\ad_G = (n-1)p_G$ be the empirical average degree of $G$.
Let $k_G$ be the smallest positive integer value such that at most $k_G$ vertices of $G$ have 
degrees differing from $\ad_G$ by more than $\kprime := \kstar + 3k_G$. 
Define $I_G = [\ad_G - \kprime, \ad_G + \kprime]$. 
For each vertex $v\in V$, let $t_v = \min\{|t| : \deg_G(v)\pm t \in I_G \}$ be the distance between $\deg_G(v)$ and the 
interval $I_G$, and define 
the \emph{weight} $\wt_G(v)$ of $v$ as follows. For a parameter $\beta > 0$ to be specified later, let
\[
\wt_G(v) = 
\begin{cases}
1 & \text{if } t_v = 0 \\
1 - \beta t_v & \text{if } t_v \in (0,1/\beta] \\
0 & \text{otherwise}.
\end{cases}
\]
That is, $\wt_G(v) = \max(0,1-\beta t_v)$.
For each pair of vertices $e=\{u,v\}$, define the \emph{weight} $\wt_G(e)$ and \emph{value} $\val_G(e)$ as follows. 
Let
\[
\wt_G(e) = \min(\wt_G(u), \wt_G(v))
\]
and let
\[
\val_G(e) = \wt_G(e) \cdot x_e + (1-\wt_G(e)) \cdot p_G
\]
where $x_e$ denotes the indicator variable on whether $e\in E$. 
As above, define the function $f$ to be the total value of all pairs of vertices in the graph,
\[
f(G) = 
\sum_{u,v \in V} \val_G(\{u,v\}),
\]
where the sum is over unordered pairs of distinct vertices.  

Once we construct this function $f$, we add noise to $f$ proportional to a $\beta$-smooth upper bound on the sensitivity of $f$, which we derive in this section.  Pseudocode for our estimator is given in Algorithm~\ref{algo:concentrated_degree}.

\adam{$\beta$ can be $\min(\epsilon, 1/\sqrt{k})$ or $\min(\epsilon, 1/k)$, it's essentially the same. 
(The first should be better for $k_G=0$, the second might become better as $k_G$ increases.}

\begin{algorithm}[t!] \label{algo:concentrated_degree} 
\caption{Estimating the edge density of a concentrated-degree graph.} \label{alg:cd}
\KwIn{A graph $G \in \cG_{n}$ and parameters $\eps > 0$ and $\kstar \ge 0$.} 
\KwOut{A parameter $0 \leq \hat{p} \leq 1$.} \vspace{10pt}

Let $p_G = \frac{1}{\binom{n}{2}} \sum_{e} x_{e}$ and $\ad_G = (n-1)p_G$.\\ \vspace{6pt}

Let $\beta = \min(\epsilon, 1/\sqrt{k})$.\\ \vspace{6pt}

Let $k_G > 0$ be the smallest positive integer such that at most $k_G$ vertices have degree outside 
$[\ad_G-\kstar - 3k_G, \ad_G + \kstar + 3k_G]$.\\ \vspace{6pt}

For $v\in V$, let $t_v = \min\{|t| : \deg_G(v)\pm t \in [\ad_G - \kstar - 3k_G, \ad_G + \kstar + 3k_G] \}$ and let
$\wt_G(v) = \max(0, 1-\beta t_v)$. \\ \vspace{6pt}

For each $u,v\in V$, let $\wt_G(\{u,v\}) = \min(\wt_G(u), \wt_G(v))$ and let $\val_G(e) = \wt_G(e) \cdot x_e + (1-\wt_G(e)) p_G$. \\ \vspace{6pt}

Let $\displaystyle f(G) = \sum_{u\neq v} \val_G(\{u,v\})$, where the sum is over unordered pairs of vertices.\\ \vspace{6pt}

Let $s = \max_{\ell \ge 0} c e^{-\beta \ell} \cdot  (k_G + \ell + \kstar + \beta (k_G+\ell)(k_G+\ell+\kstar) + 1/\beta)$,
where $c$ is the constant implied by Lemma~\ref{lem:LS_bound}.\\ \vspace{6pt}


Return $\frac{1}{\binom{n}{2}}\cdot \left(f(G) + (s/\epsilon)\cdot Z\right)$, where $Z$ is sampled from a Student's $t$-distribution with three degrees of freedom.\\

\end{algorithm}

\subsection{Analysis using Smooth Sensitivity}


We begin by bounding the local sensitivity $LS_f(G)$ of the function $f$ defined above. 

\begin{lemma}
\label{lem:LS_bound}
$LS_f(G) = O((k_G + \kstar)(1 + \beta k_G) + \frac{1}{\beta})$.
\end{lemma}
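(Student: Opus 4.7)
The plan is to bound $|f(G) - f(G')|$ for two node-adjacent graphs $G \sim G'$ that differ only on edges incident to some node $w$, via a careful pair-by-pair accounting preceded by structural observations. Since only edges at $w$ change, $|p_G - p_{G'}| = O(1/n)$ and $|\ad_G - \ad_{G'}| \leq 2$. Less trivially, $|k_G - k_{G'}| \leq 1$: for any $v \neq w$, $|\deg_G(v) - \ad_G|$ and $|\deg_{G'}(v) - \ad_{G'}|$ differ by at most $3$, so every vertex in the core interval of $G$ remains within $\kstar + 3(k_G + 1)$ of $\ad_{G'}$, forcing $k_{G'} \leq k_G + 1$ (and symmetrically). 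Consequently, for every $v \neq w$ we have $|t_v(G) - t_v(G')| = O(1)$, hence $|\wt_G(v) - \wt_{G'}(v)| = O(\beta)$. Let $P := \{v : \wt_G(v) < 1\} \cup \{v : \wt_{G'}(v) < 1\}$, so $|P| \leq k_G + k_{G'} = O(k_G)$; and observe that any $v$ contributing a nonzero weight difference must have $|\deg_G(v) - \ad_G| = O(\kstar + k_G + 1/\beta)$, because $\wt_G(v) > 0$ forces $t_v(G) \leq 1/\beta$ directly, while $\wt_G(v) = 0 < \wt_{G'}(v)$ still forces the same up to $O(1)$ by transferring from $G'$ to $G$.

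Next I classify the pairs $e$ by how their endpoints meet $P \cup \{w\}$. Pairs with both endpoints outside $P \cup \{w\}$ have $\wt_G(e) = \wt_{G'}(e) = 1$ and $x_e$ unchanged, so they contribute $0$. Pairs with both endpoints in $P$ number at most $|P|^2 = O(k_G^2)$; using the identity
\[ \val_G(e) - \val_{G'}(e) = (\wt_G(e) - \wt_{G'}(e))(x_e - p_G) + (1 - \wt_{G'}(e))(p_G - p_{G'}), \]
each contributes $O(\beta + 1/n)$, for a total of $O(\beta k_G^2)$. The at most $|P|$ pairs incident to $w$ with the other endpoint also in $P$ contribute $O(1)$ each, for $O(k_G)$.

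The remaining ``cross'' pairs $\{u, v\}$ with $v \in P \cup \{w\}$ and $u \notin P \cup \{w\}$ are the delicate case: a naive $O(\beta)$ bound per pair would sum to $\Omega(\beta k_G n)$, losing a factor of $n$. The save is to sum over $u$ \emph{first}. Because $\wt_G(u) = \wt_{G'}(u) = 1$, the $u$-dependent piece collapses via $\sum_u (x_{\{u,v\}} - p_G) = \deg_G(v) - \ad_G \pm O(|P|)$, which is $O(\kstar + k_G + 1/\beta)$ precisely because $v$ contributes. For $v \in P \setminus \{w\}$ this gives $O(\beta(\kstar + k_G) + 1)$ per $v$, hence $O(\beta k_G(\kstar + k_G) + k_G)$ summed over all such $v$. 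For $v = w$, the analogous collapse using $\wt_G(w), \wt_{G'}(w) \in [0,1]$ as coefficients (nonzero only when $|\deg(w) - \ad|$ in $G$ or $G'$ is $O(\kstar + k_G + 1/\beta)$) gives $O(\kstar + k_G + 1/\beta)$. Summing all contributions,
\[ |f(G) - f(G')| = O(\beta k_G(\kstar + k_G) + \kstar + k_G + 1/\beta) = O((k_G + \kstar)(1 + \beta k_G) + 1/\beta). \]
The hardest step to spot is this telescoping in the cross terms: it is only because $\val_G(e)$ is a convex combination of $x_e$ and $p_G$ (rather than, say, a clip) that $\sum_u(x_{\{u,v\}} - p_G)$ collapses to $\deg_G(v) - \ad_G$, trading the potentially dominant factor of $n$ for the radius $\kstar + k_G + 1/\beta$ that the weighting already forces to be small whenever the coefficient is nonzero.
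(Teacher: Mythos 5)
Your proof is correct and follows essentially the same approach as the paper's: the same structural observations ($|p_G-p_{G'}|=O(1/n)$, $|k_G-k_{G'}|\le 1$, weights change by $O(\beta)$), the same decomposition by which endpoints lie in $\Far_G\cup\Far_{G'}\cup\{v^*\}$, and the same crucial telescoping $\sum_u(x_{\{u,v\}}-p_G)=\deg_G(v)-\ad_G$ combined with the observation that nonzero weight forces $|\deg_G(v)-\ad_G|=O(k^*+k_G+1/\beta)$. The only differences are cosmetic bookkeeping — you state the difference identity explicitly and partition by pairs rather than doing the per-vertex sum with a double-counting correction as the paper does.
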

\begin{proof}
Consider any pair of graphs $G,G'$ differing in only a single vertex $v^*$, and
note that the empirical edge densities $p_G$ and $p_{G'}$ can differ by at most $\frac{2}{n} < \frac{2}{n-1}$, 
so $\ad_G$ and $\ad_{G'}$ can differ by at most $2$. 
Moreover, for any vertex $v\neq v^*$, the degree of $v$ can differ by at most $1$ between $G$ and $G'$. 
Consequently, by the Triangle Inequality, 
for any $v\neq v^*$, $|\ad_G - \deg_G(v)|$ can differ from $|\ad_{G'} - \deg_{G'}(v)|$ by at most $3$ 
and $\wt_G(v)$ can differ from $\wt_{G'}(v)$ by at most $3\beta$. 
It follows from the former statement that $k_G$ and $k_{G'}$ differ by at most $1$.

Let $\Far_G$ denote the set of at most $k_G$ vertices whose degree differs from $\ad_G$ by more than 
$\kprime = \kstar + 3k_G$. 
For any vertices $u,v \notin \Far_G \cup \Far_{G'}\cup \{v^*\}$, we have that $\wt_G(\{u,v\}) = \wt_{G'}(\{u,v\}) = 1$,
and so $\val_G(\{u,v\}) = \val_{G'}(\{u,v\})$, since the edge $\{u,v\}$ is present in $G$ if and only if it is present in $G'$. 

Now consider edges $\{u,v\}$ such that $u,v\neq v^*$ but $u\in \Far_G \cup \Far_{G'}$ (and $v$ may or may not be as 
well). 
If $\deg_G(u)\notin [\ad_G - \kprimeprime, \ad_G+\kprimeprime]$ for $\kprimeprime=\kprime + 1/\beta + 3$, then
$\wt_G(u) = \wt_{G'}(u) = 0$ and so $|\val_G(\{u,v\}) - \val_{G'}(\{u,v\})| = |p_G - p_{G'}| \le 2/n$.
Otherwise, $\deg_G(u) \in [\ad_G - \kprimeprime, \ad_G+\kprimeprime]$.
We can break up the sum
\[
f_u(G) := \sum_{v\neq u} \val_G(\{u,v\}) = \sum_{v\neq u} \wt_G(\{u,v\}) \cdot x_{\{u,v\}} + \sum_{v\neq u} (1-\wt_G(\{u,v\})) p_G.
\]
Since at most $k_G$ other vertices can have weight less than
the weight of $u$, we can bound the first term by
\[
\sum_{v\neq u} \wt_G(u) x_{\{u,v\}} \pm k_G \wt_G(u) 
= \deg_G(u) \wt_G(u) \pm k_G \wt_G(u)
\]
and the second term by
\[
p_G\cdot\left( (n-1) - \sum_{v\neq u} \wt_G(\{u,v\})\right)
= \ad_G - \ad_G \wt_G(u) \pm p_G k_G \wt_G(u).
\]
so the total sum is bounded by
\[
f_u(G) = \ad_G + (\deg_G(u) - \ad_G) \wt_G(u) \pm 2 k_G \wt_G(u).
\]
Since $|\wt_G(u) - \wt_{G'}(u)| \le 3\beta$, it follows that
\[
|f_u(G) - f_u(G')| \le 7 + 3\beta(\kprimeprime+3) + 9\beta + 6 \beta k_G 
= O(1 + \beta(k_G+\kstar)).
\]
\adam{Maybe expand this step}
Since there are at most $k_G+k_G' \le 2k_G+1$ vertices in $u\in \Far_G \cup \Far_{G'}\setminus \{v^*\}$,
the total difference in the terms of $f(G)$ and $f(G')$ corresponding to such vertices is at most 
$O(k_G + \beta k_G(k_G+\kstar)).$ However, we are double-counting any edges between two vertices in 
$u\in \Far_G \cup \Far_{G'}$; the number of such edges is $O(k_G^2)$, and for any such edge $e$,
$|\val_G(e) - \val_{G'}(e)| \pm O(\beta)$. Consequently the error induced by this double-counting is at most
$O(\beta k_G^2)$, so the total difference between the terms of $f(G)$ and $f(G')$ corresponding 
to such vertices is still $O(k_G + \beta k_G(k_G+\kstar)).$

Finally, consider the edges $\{u,v^*\}$ involving vertex $v^*$. If $\wt_G(v^*) = 0$ then 
\[ 
f_{v^*}(G) = \sum_{v\neq v^*} \val_G(\{v^*,v\}) = (n-1)p_G=\ad_G.
\]
If $\wt_G(v^*)=1$ then $\deg_G(v^*)\in [\ad_G-\kprime, \ad_G+\kprime]$,
so
\begin{align*}
f_{v^*}(G) &= \sum_{v\neq v^*} \val_G(\{v^*,v\})\\
&= \deg_G(v^*) \pm k_G \\
&= \ad_G \pm k' \pm k_G. \\
\end{align*}


Otherwise, $\deg_G(v^*)\in [\ad_G-\kprime - 1/\beta, \ad_G+\kprime + 1/\beta]$. 
Then we have that 
\begin{align*}
f_{v^*}(G) &= \sum_{v\neq v^*} \val_G(\{v^*,v\}) \\
&= \ad_G + (\deg_G(v^*)  - \ad_G) \wt_G(v^*) \pm k_G \wt_G(v^*)\\
&= \ad_G \pm (\deg_G(v^*) - \ad_G) \pm k_G,
\end{align*}
so in either case we have that $f_{v^*}(G) \in [\ad_G - O(k_G+\kstar+1/\beta), \ad_G + O(k_G+\kstar+1/\beta)]$.
Consequently
$|f_{v^*}(G) - f_{v^*}(G')| \le O(k_G + \kstar + 1/\beta)$. 

Putting everything together, we have that
$LS_f(G) = O((k_G + \kstar)(1+ \beta k_G)+ 1/\beta)$.
\end{proof}

We now compute a smooth upper bound on $LS_f(G)$. From the proof of Lemma~\ref{lem:LS_bound}, we have
that there exists some constant $C>0$ such that $LS_f(G) \le C((k_G + \kstar)(1 + \beta k_G) + \frac{1}{\beta})$.
Let $$g(k_G,\kstar,\beta)=C((k_G + \kstar)(1 + \beta k_G) + \tfrac{1}{\beta})$$ be this upper bound on $LS_f(G)$,
and
let
$$S(G) = \max_{\ell \ge 0} e^{-\ell \beta} g(k_G+\ell, \kstar, \beta).$$
\begin{lemma}
\label{lem:alg-ss}
$S(G)$ is a $\beta$-smooth upper bound on the local sensitivity of $f$. Moreover, 
$$S(G) = O((k_G + \kstar)(1 + \beta k_G)  + \tfrac{1}{\beta}).$$
\end{lemma}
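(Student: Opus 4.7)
The plan is to verify the two defining conditions of a $\beta$-smooth upper bound and then bound the size of $S(G)$ by analyzing the one-variable optimization directly. The first condition $S(G) \ge LS_f(G)$ is essentially immediate: taking $\ell = 0$ in the max gives $S(G) \ge g(k_G, \kstar, \beta)$, and by Lemma~\ref{lem:LS_bound} the right-hand side already dominates $LS_f(G)$ (since the constant $C$ was chosen for exactly this purpose).

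For the smoothness condition, I would fix neighboring $G \sim G'$ and exploit the fact, already extracted inside the proof of Lemma~\ref{lem:LS_bound}, that $|k_G - k_{G'}| \le 1$. Let $\ell^*$ attain the maximum defining $S(G)$ and set $\ell' = \ell^* + (k_G - k_{G'})$, so that $k_{G'} + \ell' = k_G + \ell^*$ and hence $g(k_{G'}+\ell',\kstar,\beta) = g(k_G+\ell^*,\kstar,\beta)$. Whenever $\ell' \ge 0$, the computation
\[
S(G) = e^{-\ell^*\beta}\, g(k_G+\ell^*,\kstar,\beta) = e^{(k_G-k_{G'})\beta}\cdot e^{-\ell'\beta}\, g(k_{G'}+\ell',\kstar,\beta) \le e^{\beta} S(G')
\]
works since $k_G - k_{G'} \le 1$. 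The only case where this shift fails is $k_{G'} = k_G + 1$ with $\ell^* < 1$, making $\ell' < 0$ inadmissible. In that case I instead use $\ell' = 0$ in the max for $S(G')$ and invoke monotonicity of $g(\cdot, \kstar, \beta)$ in its first argument (visible directly from $g(k,\kstar,\beta) = C((k+\kstar)(1+\beta k)+1/\beta)$) to get $g(k_G+\ell^*,\kstar,\beta) \le g(k_G+1,\kstar,\beta) \le S(G')$, and combine with $e^{-\ell^*\beta} \le 1$.

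To establish the size bound, I would expand $g(k_G+\ell,\kstar,\beta)$ as a polynomial in $\ell$ of degree at most two, obtaining a sum of terms of the form $\alpha_j \ell^j e^{-\ell\beta}$ for $j \in \{0,1,2\}$, with coefficients depending on $k_G, \kstar, \beta$. Each univariate maximum satisfies $\max_{\ell \ge 0} \ell^j e^{-\ell\beta} = (j/\beta)^j e^{-j}$, which is a constant factor times $1/\beta^j$. Tracking the three contributions:\ the $\ell^0$ terms give $O(k_G+\kstar) + O(1/\beta) + O(\beta k_G(k_G+\kstar))$; the $\ell^1$ terms give $O(1/\beta) + O((k_G+\kstar)/\beta \cdot \beta) = O(k_G+\kstar)$; and the $\ell^2$ term gives $O(\beta \cdot 1/\beta^2) = O(1/\beta)$. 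Collecting recovers the advertised $O((k_G+\kstar)(1+\beta k_G) + 1/\beta)$, matching $g(k_G,\kstar,\beta)$ itself up to constants — which is the right scaling, since the exponential damping prevents the max from being more than a constant factor larger than the $\ell=0$ value.

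The main potential source of friction is the off-by-one edge case in the smoothness argument, but this is handled cleanly by monotonicity of $g$ in $k$; everything else is a routine maximization of exponentially damped polynomials, with no deep obstacle.
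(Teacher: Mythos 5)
Your proof is correct, and the two halves have somewhat different relationships to the paper's argument. For the smoothness condition, you are doing essentially the same thing as the paper: exploit $|k_G - k_{G'}| \le 1$ together with the monotonicity of $g$ in its first argument and shift the index $\ell$. The paper avoids your explicit case split by phrasing it as a chain of inequalities on the max ($S(G') \le \max_{\ell\ge 0} e^{-\beta\ell} g(k_G + \ell + 1, \cdot) = e^\beta \max_{\ell\ge 1} e^{-\beta\ell} g(k_G+\ell,\cdot) \le e^\beta S(G)$), which absorbs your ``$\ell' < 0$'' edge case into the inclusion $\{\ell \ge 1\} \subseteq \{\ell \ge 0\}$; substance is identical. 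For the size bound, however, you take a genuinely different route. The paper differentiates $h(\ell) = e^{-\beta\ell} g(k_G+\ell,\kstar,\beta)$, observes that the unique interior critical point sits at $\ell = 1/\beta - k_G - \kstar \le 1/\beta$, and bounds $S(G)$ by evaluating the polynomial part at $\ell = 1/\beta$ while dropping the exponential. You instead expand $g(k_G+\ell,\kstar,\beta)$ as a quadratic in $\ell$ with nonnegative coefficients and bound each term $\alpha_j \ell^j e^{-\beta\ell}$ separately using $\max_{\ell\ge 0} \ell^j e^{-\beta\ell} = \Theta(\beta^{-j})$, then sum. This avoids calculus on the full function and would generalize to any polynomially-bounded $g$, at the mild cost of needing the coefficients to be nonnegative (true here, but worth noting explicitly) so that the max of the sum is dominated by the sum of the maxes. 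Both yield $O((k_G+\kstar)(1+\beta k_G) + 1/\beta)$. One small bookkeeping wobble: your line for the $\ell^1$ term should read $O(1/\beta + k_G + \kstar)$ rather than $O(k_G+\kstar)$ --- the $O(1/\beta)$ piece does not cancel, though it is harmlessly absorbed into the final bound.
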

\begin{proof}
For neighboring graphs $G,G'$,
we have that 
\begin{align*}
S(G') &= \max_{\ell \ge 0} e^{-\ell \beta} g(k_{G'}+\ell, \kstar, \beta)\\
&\le \max_{\ell \ge 0} e^{-\ell \beta} g(k_{G}+\ell+1, \kstar, \beta)\\
&= e^\beta \max_{\ell \ge 1} e^{-\ell \beta} g(k_{G}+\ell, \kstar, \beta)\\
&\le e^\beta \max_{\ell \ge 0} e^{-\ell \beta} g(k_{G}+\ell, \kstar, \beta)\\
&= e^\beta S(G).
\end{align*}
Moreover, for fixed $k_G, \kstar, \beta$, 
consider the function $h(\ell) = e^{-\ell \beta} g(k_G+\ell, \kstar \beta)$, and consider the derivative $h'(\ell)$.
We have that
\[
h'(\ell) = C \beta e^{-\ell \beta} (k_G + \ell)(1-\beta(k_G + \ell + \kstar)).
\]
Consequently the only possible local maximum for $\ell> 0$ would occur for $\ell = 1/\beta - k_G - \kstar$; note that
the function $h$ decreases as $\ell\to\infty$.
Consequently the maximum value of $h$ occurs for some $\ell \le 1/\beta$, and so
\begin{align*}
S(G) &= \max_{\ell \ge 0} h(\ell)\\
&= \max_{\ell \ge 0} ce^{-\ell \beta} (k_G + \ell + \kstar +  (k_G+\ell)(k_G+\ell+\kstar)\beta + 1/\beta)\\
&\le C \cdot(k_G + 1/\beta + \kstar +  (k_G+1/\beta)(k_G+1/\beta+\kstar)\beta + 1/\beta)\\
&= C \cdot (3k_G  + 2\kstar + \beta k_G (k_G + \kstar) + 3/\beta)\\
&= O((k_G + \kstar)(1 + \beta k_G) + 1/\beta)
\end{align*}
as desired.
\end{proof}

\begin{theorem}
Algorithm~\ref{alg:cd} is $(O(\epsilon),0)$-differentially private. Moreover, for any $k$-concentrated $n$-vertex 
graph $G=(V,E)\in\cG_{n,k}$ with $k\ge 1$, we have that Algorithm~\ref{alg:cd} satisfies
\[
\ex{\cA}{\left(\frac{|E|}{\binom{n}{2}} - \cA_{\epsilon,k}(G)\right)^2\;} = O\left( \frac{k^2}{\epsilon^2 n^4} + \frac{1}{\epsilon^4 n^4}   \right)
\]
\end{theorem}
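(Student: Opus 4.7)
The plan is to decompose the theorem into its two claims and handle each by directly appealing to the smooth-sensitivity machinery established in the preceding lemmas. For privacy, I would invoke Theorem~\ref{thm:ss} applied to $f$ with the smooth upper bound $S$ from Lemma~\ref{lem:alg-ss}. The hypothesis $\beta \le \epsilon$ holds because the algorithm sets $\beta = \min(\epsilon, 1/\sqrt{k}) \le \epsilon$, so the noisy quantity $f(G) + (S(G)/\epsilon) Z$ is $(O(\epsilon),0)$-node-DP, and the final division by $\binom{n}{2}$ is just post-processing.

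For the utility bound I would first observe that on inputs from $\cG_{n,k}$ with $\kstar = k$ the non-private part of the estimator is \emph{exact}. Since every degree lies in $[\ad_G - k, \ad_G + k]$, the smallest positive integer $k_G$ such that at most $k_G$ vertices fall outside $[\ad_G - \kstar - 3k_G, \ad_G + \kstar + 3k_G]$ is $k_G = 1$: zero vertices are outside even when $k_G = 1$, but by definition $k_G$ must be strictly positive. Hence $t_v = 0$ and $\wt_G(v) = 1$ for every $v$, which forces $\val_G(e) = x_e$ for every pair $e$ and thus $f(G) = |E|$. Plugging $k_G = 1$ and $\kstar = k$ into Lemma~\ref{lem:alg-ss} gives $S(G) = O((1+k)(1+\beta) + 1/\beta) = O(k + 1/\beta)$, and the choice $\beta = \min(\epsilon, 1/\sqrt{k})$ yields $1/\beta \le 1/\epsilon + \sqrt{k}$, so using $k \ge 1$ this simplifies to $S(G) = O(k + 1/\epsilon)$.

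Combining these observations, $\cA(G) - |E|/\binom{n}{2} = S(G) \cdot Z / (\epsilon \binom{n}{2})$, and since the Student's $t$-distribution with three degrees of freedom has $\ex{}{Z^2} = 3$, the mean-squared error equals $O(S(G)^2/(\epsilon^2 n^4)) = O(k^2/(\epsilon^2 n^4) + 1/(\epsilon^4 n^4))$, matching the claim. I do not expect a serious obstacle here, since Lemma~\ref{lem:LS_bound} and Lemma~\ref{lem:alg-ss} do all the heavy lifting. The only subtlety worth flagging is the requirement that $k_G$ be a \emph{positive} integer: this forces $k_G = 1$ rather than $0$ on concentrated-degree inputs, contributes the unavoidable $O(1/\beta)$ additive term to $S(G)$, and is ultimately responsible for the $1/(\epsilon^4 n^4)$ summand in the final bound.
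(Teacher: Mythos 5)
Your proof is correct and follows essentially the same route as the paper's: invoke the smooth-sensitivity theorem for privacy, observe that on $k$-concentrated inputs all weights equal $1$ (hence $f(G)=|E|$ exactly), note $k_G=1$, plug into the bound from Lemma~\ref{lem:alg-ss} to get $S(G)=O(k+1/\epsilon)$, and use the $O(1)$ variance of the Student's $t$-distribution. Your writeup is in fact slightly more careful than the paper's in spelling out why $k_G=1$ and in tracking the $1/\beta$ term through the choice $\beta=\min(\epsilon,1/\sqrt{k})$, but no new ideas are introduced.
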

\begin{proof}
Algorithm~\ref{alg:cd} computes function $f$ and releases it with noise proportional to a $\beta$-smooth 
upper bound on the local sensitivity for $\beta \le \epsilon$. Consequently  $(O(\epsilon),0)$-differential privacy
follows immediately from Theorem~\ref{thm:ss}. 

We now analyze its accuracy on $k$-concentrated graphs $G$.
If $G$ is $k$-concentrated and $k^* \ge k$, then $\wt_G(v)=1$ for all vertices $v\in V$ and $\val_G(\{u,v\}) = x_{\{u,v\}}$ for all
$u,v\in V$, and so $f(G) = |E|$. Consequently Algorithm~\ref{alg:cd} computes the edge density
of a $k$-concentrated graph with noise distributed according to the Student's $t$-distribution scaled by a factor of
$S(G) / (\epsilon \binom{n}{2})$. 

Since $G$ is $k$-concentrated, we also have that $k_G = 1$, and so 
$S(G) = O(k + \beta(k+1) + 1/\beta) \le O(k + 1/\epsilon)$ by Lemma~\ref{lem:alg-ss}.
The variance of the Student's $t$-distribution with three degrees of freedom is $O(1)$,
so the expected squared error of the algorithm is
\[
O\left( \frac{(k+1/\epsilon)^2}{\epsilon^2 n^4}\right)
= O\left( \frac{k^2}{\epsilon^2  n^2} + \frac{1}{\epsilon^4 n^4}\right)
\] 
as desired.
\end{proof}

\section{Application to \ER~Graphs}

In this section we show how to apply Algorithm~\ref{alg:cd} to estimate the parameter of an \ER~graph.
Pseudocode is given in Algorithm~\ref{alg:er}.

\begin{algorithm} \label{algo:put_ineff}
\caption{Estimating the parameter of an \ER~graph.} \label{alg:er}
\KwIn{A graph $G \in \cG_{n}$ and parameters $\eps,\alpha > 0$.}
\KwOut{A parameter $0 \leq \hat{p} \leq 1$.} \vspace{10pt}

Let $\tilde{p}'
 \gets \frac{1}{\binom{n}{2}} \sum_{e} x_{e} + (2/\eps n) \cdot Z$ where $Z$ is a standard Laplace \\ \vspace{6pt}
Let $\tilde{p} \gets \tilde{p}' + 4 \log(1/\alpha) / \eps n$ and $\tilde{k} \gets \sqrt{\tilde{p} n \log (n/\alpha)}$ \\ \vspace{8pt}
Return $\hat{p} \gets \cA_{\tilde{k},\eps}(G)$ where $\cA_{\tilde{k},\eps}$ is Algorithm~\ref{alg:cd} with parameters $\tilde{k}$ and $\eps$ \\
\end{algorithm}

It is straightforward to prove that this mechanism satisfies differential privacy.
\begin{thm}
Algorithm~\ref{alg:er} satisfies $(O(\eps),0)$-node-differential privacy.
\end{thm}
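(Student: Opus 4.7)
The algorithm consists of two phases: first it privately estimates $p$ to get $\tilde{p}'$ (and derived quantities $\tilde{p},\tilde{k}$), and then it runs Algorithm~\ref{alg:cd} with parameter $\tilde{k}$. The plan is to show each phase is $O(\eps)$-node-DP and conclude via composition.

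\textbf{Step 1: Privacy of $\tilde{p}'$.} I would first observe that the edge density $p_G = |E|/\binom{n}{2}$ has global sensitivity at most $\frac{n-1}{\binom{n}{2}} = \frac{2}{n}$, because modifying the neighborhood of a single node can add or remove at most $n-1$ edges. Thus the release of
\[
\tilde{p}' = p_G + \frac{2}{\eps n}\cdot Z
\]
with $Z\sim\Lap(1)$ is exactly the Laplace mechanism, and Theorem~\ref{thm:gs} gives that it is $(\eps,0)$-node-DP.

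\textbf{Step 2: Deriving $\tilde{p}$ and $\tilde{k}$ is post-processing.} The quantities $\tilde{p} = \tilde{p}' + 4\log(1/\alpha)/(\eps n)$ and $\tilde{k} = \sqrt{\tilde{p} n \log(n/\alpha)}$ are deterministic functions of $\tilde{p}'$ and the public parameters $n,\eps,\alpha$. Since DP is closed under post-processing, the pair $(\tilde{p},\tilde{k})$ (and in particular $\tilde{k}$ alone) satisfies $(\eps,0)$-node-DP.

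\textbf{Step 3: Privacy of $\cA_{\tilde{k},\eps}(G)$ given $\tilde{k}$.} By the theorem in the previous section, for any fixed value of the parameter $k^* = \tilde{k}$, Algorithm~\ref{alg:cd} run on $G$ with parameters $\tilde{k}$ and $\eps$ is $(O(\eps),0)$-node-DP. I would remark that the privacy analysis of Algorithm~\ref{alg:cd} holds for every fixed choice of $k^*$, so it applies pointwise to every possible realization of $\tilde{k}$.

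\textbf{Step 4: Composition.} Now view the whole algorithm as the adaptive composition of (i) the mechanism $\cA_1(G) = \tilde{k}$, which is $(\eps,0)$-DP by Steps 1--2, and (ii) the mechanism $\cA_2(G;\tilde{k}) = \cA_{\tilde{k},\eps}(G)$, which for each fixed value of $\tilde{k}$ is $(O(\eps),0)$-DP by Step 3. Applying Lemma~\ref{lem:dpcomp} yields that the joint output $(\tilde{k},\hat{p})$ is $(O(\eps),0)$-node-DP, and releasing only $\hat p$ is a post-processing of this, hence also $(O(\eps),0)$-node-DP.

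The proof is essentially routine once the sensitivity bound and the post-processing observation are in place; there is no real obstacle. The only point that deserves a sentence of care is that $\tilde{k}$ is a data-dependent input to Algorithm~\ref{alg:cd}, so one must invoke adaptive composition rather than simply appealing to the privacy of Algorithm~\ref{alg:cd} ``in isolation.''
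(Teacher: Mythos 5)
Your proof is correct and follows essentially the same route as the paper: bound the global sensitivity of the edge density to get $(\eps,0)$-DP for $\tilde{p}'$ via the Laplace mechanism, note that Algorithm~\ref{alg:cd} is $(O(\eps),0)$-DP for each fixed $\tilde{k}$, and apply (adaptive) composition from Lemma~\ref{lem:dpcomp}. You have merely spelled out the post-processing step and the adaptivity of the composition more explicitly than the paper does.
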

\begin{proof}
The first line computes the empirical edge density of the graph $G$, which is a function with global sensitivity $(n-1)/\binom{n}{2} = 2/n$.  Therefore by Theorem~\ref{thm:gs} this step satisfies $(\eps,0)$-differential privacy.  The third line runs an algorithm that satisfies $(O(\eps),0)$-differential privacy for every fixed parameter $\tilde{k}$.  By Lemma~\ref{lem:dpcomp}, the composition satisfies $(O(\eps),0)$-differential privacy.
\end{proof}

Next, we argue that this algorithm satisfies the desired accuracy guarantee.
\begin{thm}
For every $n \in \N$ and $\frac12 \geq p \geq 0$, and an appropriate parameter $\alpha > 0$, Algorithm~\ref{alg:er} satisfies
$$
\ex{G \sim G(n,p), \cA}{(p - \cA(G))^2} = \frac{p(1-p)}{\binom{n}{2}} + \tilde{O}\left( \frac{\max\{p,\frac1n\}}{\eps^2 n^3} + \frac{1}{\eps^4 n^4}   \right)
$$
\end{thm}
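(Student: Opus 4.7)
The plan is to reduce to the accuracy guarantee for Algorithm~\ref{alg:cd}, leveraging the fact that $G \sim G(n,p)$ has tightly concentrated degrees with high probability so that, with the right choice of $\tilde{k}$, the concentrated-degree estimator produces an essentially unbiased noisy estimate of $p_G$. I would set $\alpha = 1/\mathrm{poly}(n)$ so that the various high-probability events fail only with $1/\mathrm{poly}(n)$ probability, which is swallowed by the $\tilde{O}$ notation.

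First I would establish two Chernoff facts about $G \sim G(n,p)$: with probability at least $1 - \alpha$, every vertex has degree within $k := \Theta\bigl(\sqrt{p n \log(n/\alpha)} + \log(n/\alpha)\bigr)$ of the mean $(n{-}1)p$, so $G \in \cG_{n,k}$; and the empirical edge density satisfies $|p_G - p| = \tilde{O}\bigl(\sqrt{p}/n + 1/n^2\bigr)$. Next, standard Laplace tail bounds give $|\tilde{p}' - p_G| \le 2 \log(1/\alpha)/(\eps n)$ with probability $\ge 1 - \alpha$, so the deterministic positive shift by $4 \log(1/\alpha)/(\eps n)$ in Line~2 forces $p_G \le \tilde{p} \le p_G + O(\log(1/\alpha)/\eps n)$, and in particular $\tilde{p} \ge p$ with high probability. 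Plugging into the definition of $\tilde{k}$ yields $\tilde{k} \ge k$ and $\tilde{k}^2 \le \tilde{O}(p n + 1/\eps)$.

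Conditioning on the high-probability event $\mathcal{E}$ that all the above bounds hold, Algorithm~\ref{alg:cd} run with parameter $\tilde{k}$ treats $G$ as $\tilde{k}$-concentrated, so $f(G) = |E|$ and $\ex{\cA}{\hat{p} \mid G} = p_G$, while the previous theorem bounds its conditional variance by $O\bigl(\tilde{k}^2/\eps^2 n^4 + 1/\eps^4 n^4\bigr)$. Decomposing
\[
\ex{G,\cA}{(p - \hat{p})^2} = \ex{G}{(p - p_G)^2} + 2\,\ex{G,\cA}{(p - p_G)(p_G - \hat{p})} + \ex{G,\cA}{(p_G - \hat{p})^2},
\]
the first term equals $p(1-p)/\binom{n}{2}$, the cross term vanishes under $\mathcal{E}$ by unbiasedness, and substituting the bound on $\tilde{k}^2$ gives $\ex{G,\cA}{(p_G - \hat{p})^2 \mid \mathcal{E}} = \tilde{O}\bigl(p/\eps^2 n^3 + 1/\eps^3 n^4 + 1/\eps^4 n^4\bigr)$. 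Since $\eps \le 1$ the intermediate $1/\eps^3 n^4$ is dominated by $1/\eps^4 n^4$, and enforcing $\tilde{k} \ge 1$ contributes at most $O(1/\eps^2 n^4)$, which accounts for the $\max\{p, 1/n\}$ factor in the small-$p$ regime.

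The main obstacle will be carefully absorbing the contribution of the complementary event $\bar{\mathcal{E}}$ and handling the cross term off of $\mathcal{E}$. Using coarse worst-case bounds (clipping $\hat{p}$ to $[0,1]$, or noting that the smooth upper bound satisfies $s \le O(n)$ deterministically, so the Student's $t$ noise has second moment at most $O(s^2/\eps^2 n^4) = O(1/\eps^2 n^2)$), the contribution of $\bar{\mathcal{E}}$ to the expected squared error is at most $O(\Pr[\bar{\mathcal{E}}] \cdot \mathrm{poly}(n)) = 1/\mathrm{poly}(n)$ for $\alpha$ chosen as a sufficiently small polynomial in $1/n$, and is absorbed into $\tilde{O}$. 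The polylogarithmic factors from $\log(n/\alpha) = O(\log n)$ are absorbed similarly, and all remaining steps are routine.
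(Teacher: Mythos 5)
Your proposal is correct and follows essentially the same route as the paper's proof: condition on the high-probability event that the degrees concentrate and the Laplace noise is small, argue that $\tilde{k}$ is then a valid upper bound on the concentration parameter with $\tilde{k}^2 = \tilde{O}(pn + 1/\eps)$, invoke the concentrated-degree theorem, and absorb the failure event and log factors into $\tilde{O}(\cdot)$. You are a bit more explicit than the paper about the bias--variance decomposition (showing the cross term vanishes under the good event) and about bounding the complementary event, but the argument is the same.
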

\begin{proof}
We will prove the result in the case where $p \geq \frac{\log n}{n}$.  The case where $p$ is smaller will follow immediately by using $\frac{\log n}{n}$ as an upper bound on $p$.  The first term in the bound is simply the variance of the empirical edge-density $\bar{p}$.  For the remainder of the proof we will focus on bounding $\ex{}{(\bar{p} - \hat{p})^2}$.

A basic fact about $G(n,p)$ for $p \geq \frac{\log n}{n}$ is that with probability at least $1-2\alpha$: (1) $|\bar{p} - p| \leq 2\log(1/\alpha)/n$, and
(2) the degree of every node $i$ lies in the interval $[ \ad \pm \sqrt{pn \log(n/\alpha)} ]$ where $\ad$ is the average degree of $G$.  
We will assume for the remainder that these events hold.

Using Theorem~\ref{thm:gs}, we also have that with probability at least $1-\alpha$, the estimate $\tilde{p}'$ satisfies $| \bar{p} - \tilde{p}' | \leq 4 \log(1/\alpha) / \eps n$.  We will also assume for the remainder that this latter event holds.  Therefore, we have
$p \leq \tilde{p}$ and $p \geq \tilde{p} - 8 \log(1/\alpha)/\eps n.$

Assuming this condition holds, the graph will have $\tilde{k}$-concentrated degrees for $\tilde{k}$ as specified on line 2 of the algorithm.  Since this assumption holds, we have
\begin{align*}
\ex{}{(\bar{p} - \cA_{\tilde{k},\eps}(G))^2} 
={} &\tilde{O}\left( \frac{\tilde{k}^2}{\eps^2 n^4} + \frac{1}{\eps^4 n^4} \right)\\ 
={} &\tilde{O}\left( \frac{\tilde{p} n}{\eps^2 n^4} + \frac{1}{\eps^4 n^4} \right) \\
={} &\tilde{O}\left( \frac{p n + \frac{1}{\eps n}}{\eps^2 n^4} + \frac{1}{\eps^4 n^4} \right)\\
={} &\tilde{O}\left( \frac{p n}{\eps^2 n^4} + \frac{1}{\eps^4 n^4} \right).
\end{align*}

To complete the proof, we can plug in a suitably small $\alpha = 1/\mathrm{poly}(n)$ so that the $O(\alpha)$ probability of failure will not affect the overall mean-squared error in a significant way.
\end{proof}

\section{Lower Bounds for Concentrated-Degree Graphs}

In this section we prove a lower bound for estimating the number of edges in concentrated-degree graphs.  Theorem~\ref{thm:main-lb}, which lower bounds the mean squared error follows by applying Jensen's Inequality.
\begin{thm} \label{thm:main-lb-full}
For every $n,k \in \N$, every $ \eps \in [\frac{2}{n},\frac14]$ and $\delta \leq \frac{\eps}{32}$, and every $(\eps,\delta)$-node-DP algorithm $A$, there exists $G \in \cG_{n,k}$ such that
$
\ex{A}{ | p_{G} - A(G) |} = \Omega\left( \frac{k}{\eps n^2} + \frac{1}{\eps^2 n^2} \right).
$ 
\end{thm}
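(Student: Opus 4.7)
The plan is to prove both terms of the lower bound via the two-point (Le Cam) method combined with group privacy (Lemma~\ref{lem:groupdp}). The key observation is that any two graphs $G_0, G_1 \in \cG_{n,k}$ at node distance $c$ in $\cG_n$ satisfy $|E(G_0) - E(G_1)| = O(ck + c^2)$---a bound obtained by combining $|\Delta E_{SS}| \le \binom{c}{2}$ on edge changes among the $c$ modified vertices $S$ with the per-modifier concentration constraint $t_i = \deg_{G_1}(v_i) - \deg_{G_0}(v_i) \le 2\Delta E/n + 2k$---and that this bound is essentially tight. Setting $c = \Theta(1/\eps)$ then gives a pair with $|p_{G_0} - p_{G_1}| = \Omega(k/(\eps n^2) + 1/(\eps^2 n^2))$.

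For the explicit construction, fix $d$ (e.g.\ $d = n/4$) and a distinguished set $S$ of $c = \Theta(1/\eps)$ vertices. In $G_0$, place the $c$-clique on $S$ so that $E_{SS}^0 = \binom{c}{2}$ and choose remaining edges so that each modifier has degree $\bar{d}_0 - k$ (clique edges plus a small number of edges to $V \setminus S$) while every non-modifier has degree in $\{\bar{d}_0 - 1, \bar{d}_0, \bar{d}_0 + 1\}$. In $G_1$, remove all mod-mod edges (so $E_{SS}^1 = 0$) and rewire so that each modifier has degree $\bar{d}_1 + k$ (entirely to non-modifiers), spreading the added bipartite edges so that each non-modifier's degree changes by at most $1$ and stays in $[\bar{d}_1 - k, \bar{d}_1 + k]$ for the new average $\bar{d}_1 = \bar{d}_0 + 2\Delta E/n$. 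The modifier degree changes are then $t_i = 2k$, contributing $\sum_i t_i = 2ck$, which together with $2\Delta E_{SS} \approx -c^2$ yields $2\Delta E = \sum_v t_v \approx 2ck + c^2$, giving $|p_{G_0} - p_{G_1}| = \Theta(k/(\eps n^2) + 1/(\eps^2 n^2))$. The node distance of the pair is at most $c$ by construction, and realizability of both graphs (as well as the common induced subgraph on $V \setminus S$) follows from standard Erd\H{o}s-Gallai / Gale-Ryser conditions for the parameter ranges of interest.

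With such $G_0, G_1 \in \cG_{n,k}$ at node distance $c = C/\eps$ for a small enough absolute constant $C$, the hypothesis $\delta \leq \eps/32$ forces $c\eps$ and $c\delta$ to be at most fixed absolute constants. Lemma~\ref{lem:groupdp} gives
\[
|\Pr[\cA(G_0) \in E] - \Pr[\cA(G_1) \in E]| \le (e^{c\eps} - 1) + c e^{c\eps}\delta
\]
for every event $E$, so the total variation distance between $\cA(G_0)$ and $\cA(G_1)$ is at most some explicit constant $\alpha < 1$. The standard Le Cam reduction, applied to the event $E = \{\cA(G) \leq (p_{G_0} + p_{G_1})/2\}$, then yields
\[
\max_{i \in \{0,1\}} \ex{\cA}{|p_{G_i} - \cA(G_i)|} \;\ge\; \tfrac{1-\alpha}{4}\,|p_{G_0} - p_{G_1}| \;=\; \Omega\!\left(\frac{k}{\eps n^2} + \frac{1}{\eps^2 n^2}\right),
\]
which is the statement of the theorem.

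The main technical obstacle is the combinatorial realization of the two required graphs in all of the relevant parameter regimes (especially the corner cases $\eps \approx 2/n$ and $\eps \approx 1/4$, where rounding of $c$, $\binom{c}{2}$, and the per-vertex degree changes to integers has to be checked). The upper bound $\Delta E \le O(ck + c^2)$ mentioned above certifies that the clique/independent-set construction is within a constant factor of the optimal separation at node distance $c$, so no stronger two-point bound is available and the stated lower bound is essentially the best this approach can give.
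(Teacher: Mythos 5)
Your framework---a two-point (Le Cam) argument via group privacy for graphs at node distance $\Theta(1/\eps)$---is the same as the paper's Lemma~\ref{lem:locallb}, and your Le Cam calculation (TV distance small when $c\eps$ is a small constant, then the threshold event at $(p_{G_0}+p_{G_1})/2$) is a correct, if slightly different, way to package it. Your converse observation that any two graphs of $\cG_{n,k}$ at node distance $c \leq n/4$ have $|E(G_0)-E(G_1)| = O(ck + c^2)$ is also correct and a nice addition showing the packing is essentially optimal for this method.

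Where you diverge from the paper---and where the gap lies---is in the hard-instance construction. The paper exhibits two \emph{separate, very simple} pairs: an empty graph vs.\ a near-bipartite graph with $1/\eps$ left vertices of degree $k$ (Lemma~\ref{lem:largek}, giving the $k/\eps n^2$ term), and two partitions of $[n]$ into $i$ vs.\ $i+1$ equal-size cliques for $i = \lceil n\eps \rceil$ (Lemma~\ref{lem:smallk}, giving the $1/\eps^2 n^2$ term). Both are trivially realizable in $\cG_{n,k}$ across the whole parameter range $\eps \in [2/n, 1/4]$. You instead try to build one unified pair that gets both terms at once by putting a $c$-clique on the modifiers in $G_0$, deleting it and rewiring in $G_1$, and pushing modifier degrees from $\ad_0 - k$ to $\ad_1 + k$. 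Two of your intermediate claims fail outright in the relevant regime. First, a modifier in $G_0$ already has $c-1$ clique edges, so its degree cannot be $\ad_0 - k$ unless $\ad_0 - k \geq c - 1$; with $\ad_0 \approx n/4$ and $c = \Theta(1/\eps)$, this fails whenever $\eps = \Theta(1/n)$ and $k$ is not small, and certainly fails if $c$ is on the order of $n$. Second, the claim that ``each non-modifier's degree changes by at most $1$'' cannot hold once $\Delta E_{SV} = \Theta(c^2 + ck)$ exceeds $n - c$, which again happens for $c = \Theta(n)$; what you actually need (and what is true on average, though it requires a real argument) is that each non-modifier's degree change is within $\pm(k-1)$ of the shift $2\Delta E/n$ in the average degree. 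Gesturing at Erd\H{o}s--Gallai / Gale--Ryser does not close this gap: the degree sequences must be shown graphical \emph{jointly} with the constraint that $G_0$ and $G_1$ agree on the induced subgraph on $V\setminus S$, and the arithmetic of spreading $\Delta E_{SV}$ edges onto $n-c$ vertices within $\pm k$ of a target must actually be checked. In short, your approach is sound in spirit and would likely work after repair, but as written it has a genuine unfinished combinatorial step that the paper's simpler two-instance construction avoids entirely; if you are going to use the union-of-cliques idea at all, it is cleaner to use it in isolation for the $1/\eps^2 n^2$ term (as the paper does) and handle the $k/\eps n^2$ term by a separate, degree-$k$ star-like construction.
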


The proof relies on the following standard fact about differentially private algorithms.
Since we are not aware of a formal treatment in the literature, 
we include a proof for completeness.
\begin{lem} \label{lem:locallb}
Suppose there are two graphs $G_0, G_1 \in \cG_{n,k}$ at node distance at most $\frac{1}{\eps}$ from one another.  Then for every $(\eps,\frac{\eps}{32})$-node-DP algorithm $A$, there exists $b \in \{0,1\}$ such that
$$
\ex{A}{\left| p_{G_{b}} - A(G_{b}) \right|} = \Omega\left(  \left| p_{G_{0}} - p_{G_{1}} \right|  \right).
$$
\end{lem}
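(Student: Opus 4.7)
\medskip\noindent\textbf{Proof plan for Lemma~\ref{lem:locallb}.} This will be the standard packing-style distinguishing argument for differential privacy, adapted to the approximate $(\eps,\delta)$ setting via group privacy at distance $c \le 1/\eps$. Intuitively, if an algorithm estimates $p_G$ with error much smaller than $|p_{G_0} - p_{G_1}|/2$ on both endpoints, then it can distinguish the two graphs with high confidence; but at group distance $1/\eps$ it satisfies only $(c\eps, c e^{c\eps} \delta) = (O(1), O(\delta/\eps))$-DP, and the hypothesis $\delta \le \eps/32$ makes this distinguishing power bounded by a universal constant below $1$.

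First, I would set $r = |p_{G_0} - p_{G_1}|/2$ and introduce the disjoint ``closeness'' sets $S_0 = \{x \in \bbR : |x - p_{G_0}| < r\}$ and $S_1 = \{x \in \bbR : |x - p_{G_1}| < r\}$. Assume for contradiction that $\ex{A}{|p_{G_b} - A(G_b)|} \le c_0 \cdot |p_{G_0} - p_{G_1}|$ for both $b \in \{0,1\}$ and a small constant $c_0 > 0$ to be chosen. Markov's inequality applied to the nonnegative random variable $|p_{G_b} - A(G_b)|$ gives $\pr{A}{A(G_b) \notin S_b} \le 2 c_0$, and disjointness of $S_0, S_1$ then forces $\pr{A}{A(G_b) \in S_{1-b}} \le 2 c_0$.

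Second, I would invoke Lemma~\ref{lem:groupdp} at node distance $c \le 1/\eps$. Since $c \eps \le 1$, we have $e^{c \eps} \le e$, and the additive loss satisfies $c e^{c\eps} \delta \le e \delta / \eps \le e/32$ using $\delta \le \eps/32$. Applying group privacy to the set $S_0$ yields
\[
1 - 2 c_0 \;\le\; \pr{A}{A(G_0) \in S_0} \;\le\; e \cdot \pr{A}{A(G_1) \in S_0} + \tfrac{e}{32} \;\le\; 2 e c_0 + \tfrac{e}{32}.
\]
Rearranging gives $c_0 \ge (1 - e/32)/(2 + 2e)$, a positive universal constant. Taking the contrapositive, for every $(\eps, \eps/32)$-node-DP algorithm $A$ there must exist some $b \in \{0, 1\}$ with $\ex{A}{|p_{G_b} - A(G_b)|} = \Omega(|p_{G_0} - p_{G_1}|)$, as claimed.

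The only delicate point is step two: the additive slack $c e^{c\eps} \delta$ introduced by group privacy must remain strictly below the $1 - 2c_0$ gap produced by Markov. The hypothesis $\eps \le 1/4$ keeps $c\eps \le 1$ so that $e^{c\eps}$ is a bounded constant, and $\delta \le \eps/32$ keeps the additive term below a small universal constant; these are exactly tuned so that a positive $c_0$ survives the rearrangement. The specific structure of $\cG_{n,k}$ plays no role---only that $G_0$ and $G_1$ both lie in the domain of $A$---so the lemma will later be applied by exhibiting an appropriate pair $G_0, G_1 \in \cG_{n,k}$ with $|p_{G_0} - p_{G_1}|$ matching the claimed lower bound.
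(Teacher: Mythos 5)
Your proposal is correct and follows essentially the same approach as the paper's proof: define the disjoint intervals $S_0, S_1$ around $p_{G_0}$ and $p_{G_1}$, apply group privacy at node distance $\le 1/\eps$ (yielding a multiplicative factor $e$ and an additive term bounded by a small constant via $\delta\le\eps/32$), and combine with Markov's inequality to conclude that the algorithm cannot place most of its mass in both intervals simultaneously. The paper phrases this directly via $\rho = \min\{\Pr[A(G_0)\in S_0], \Pr[A(G_1)\in S_1]\}\le 4/5$ and then converts to an expectation bound, whereas you run the argument by contradiction starting from an assumed expectation bound; these are the same argument. One small misattribution: the condition $c\eps\le 1$ follows directly from $c\le 1/\eps$, so the hypothesis $\eps\le 1/4$ plays no role in this lemma (it is used elsewhere, in Lemma~\ref{lem:smallk}).
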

\begin{proof}
Let $A$ be any $\eps$-node-DP algorithm.  Since $G_0,G_1$ have node distance at most $\frac{1}{\eps}$, by group privacy (Lemma~\ref{lem:groupdp}), for every set $S$ and every $b \in \{0,1\}$
$$
\pr{}{A(G_b) \in S} \leq e \cdot \pr{}{A(G_{1-b}) \in S} + \tfrac{1}{16}.
$$
Now, let $S_{b} = \set{y : \left| y - p_{G_{b}} \right| < \frac12 \left| p_{G_{0}} - p_{G_{1}} \right|}$ and note that $S_{0}$ and $S_{1}$ are disjoint by construction.  Let $\rho = \min\{ \pr{}{A(G_{0}) \in S_{0}}, \pr{}{A(G_1) \in S_{1}}\}$.  Then we have
\begin{align*}
1-\rho \geq{} &\pr{}{A(G_0) \not\in S_0} \\
\geq{} &\pr{}{A(G_0) \in S_1} \\
\geq{} &e^{-1} \pr{}{A(G_1) \in S_1} - \tfrac{1}{16} \\
\geq{} &e^{-1} \rho - \tfrac{1}{16}
\end{align*}
from which we can deduce $\rho \leq \frac{4}{5}$.  Therefore, for some $b \in \zo$, we have 
$$
\pr{}{ | p_{G_{b}} - A(G_{b}) | \geq \tfrac12 \left| p_{G_{0}} - p_{G_{1}} \right|  } 
\geq \tfrac{1}{5},
$$
from which the lemma follows.
 \end{proof}

We will construct two simple pairs of graphs to which we can apply Lemma~\ref{lem:locallb}.
\begin{lem}[Lower bound for large $k$] \label{lem:largek}
For every $n, k \in \N$ and $\eps \geq 2/n$, there is a pair of graphs $G_0,G_1 \in \cG_{n,k}$ at node distance $1/\eps$ such that $| p_{G_{0}} - p_{G_{1}} | = \Omega(\frac{k}{\eps n^2})$.
\end{lem}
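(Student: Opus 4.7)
The plan is to construct $G_0, G_1 \in \cG_{n,k}$ explicitly.  Set $c := \lfloor 1/\eps \rfloor$, which satisfies $c \le n/2$ under the hypothesis $\eps \ge 2/n$.  I will take $G_0$ to be a $d$-regular graph on $n$ vertices for a suitable $d$ close to $n/2$ (adjusting by one if $nd$ is odd), fix a designated subset $V_M \subseteq V$ of $c$ ``modified'' vertices, and build $G_1$ by inserting new edges incident to $V_M$ that raise each of their degrees by roughly $k$.

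The new edges are placed in one of two ways, depending on the regime.  If $k \le c - 1$, I add a $k$-regular graph on $V_M$ (adjusting a single edge if $ck$ is odd), inserting $\lfloor ck/2 \rfloor$ new edges, all inside $V_M$, and leaving the degrees outside $V_M$ unchanged.  If $k \ge c$, I instead add the complete graph on $V_M$ and, for each $v \in V_M$, an additional $k - (c-1)$ edges from $v$ to distinct vertices of $V \setminus V_M$, distributing these $c(k - c + 1)$ endpoints as evenly as possible across the $n - c$ outside vertices (which is realizable without multi-edges since $c(k - c + 1)/(n - c) \le c$ whenever $k \le n - 1$).  In either regime the number of added edges is $\Theta(ck)$, so $|p_{G_0} - p_{G_1}| = \Omega(ck/n^2) = \Omega(k/(\eps n^2))$, and since $G_0$ and $G_1$ differ only on edges incident to the $c$ vertices of $V_M$, their node distance is at most $c \le 1/\eps$.

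The main obstacle is verifying $G_1 \in \cG_{n,k}$, i.e., that every degree in $G_1$ lies within $k$ of $\ad_{G_1}$.  The small-$k$ regime is immediate: $\ad_{G_1} = d + ck/n$, modified vertices have degree $d + k$ lying $k(1 - c/n) \le k$ above the average, and unmodified vertices have degree $d$ lying $ck/n \le k$ below.  The large-$k$ regime is more delicate because $\ad_{G_1}$ shifts by $\Theta(ck/n)$ and the even distribution of endpoints produces rounding errors of up to one per vertex.  After bookkeeping, the key concentration inequality reduces to $c^2(c - k) \le k(n-c)^2$, which is trivial when $k \ge c$ since the left-hand side is non-positive; the unit rounding slack is absorbed by decrementing the modified target degree by one, which does not affect the asymptotic $\Omega(k/(\eps n^2))$ lower bound.
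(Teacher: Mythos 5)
Your high-level plan---modify roughly $c = \lfloor 1/\eps\rfloor$ vertices so as to add $\Theta(ck)$ edges while keeping the degrees $k$-concentrated---is the right idea and matches the spirit of the paper's construction. However, your choice of base graph introduces a genuine gap that you never address: you take $G_0$ to be $d$-regular with $d \approx n/2$ and then ``insert new edges,'' but you never verify that the edges you want to insert are actually absent from $G_0$. In the $k \le c-1$ regime you must embed a $k$-regular graph on $V_M$ edge-disjointly from $G_0[V_M]$; if $d \approx n/2$ and $V_M$ is an arbitrary set of size $c$, a vertex of $V_M$ may already be adjacent to nearly all of $V_M$ in $G_0$, so the required $k$ non-neighbors inside $V_M$ need not exist. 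In the $k \ge c$ regime the problem is worse: you add the complete graph on $V_M$, which forces $V_M$ to be an independent set in $G_0$; but then every $v\in V_M$ already has all $d \approx n/2$ of its $G_0$-neighbors in $V\setminus V_M$, so when $c$ is large (say $c \approx n/2$) there may be essentially no room in $V\setminus V_M$ for the additional $k-c+1$ outgoing edges per modified vertex, and likewise the claim ``realizable without multi-edges since $c(k-c+1)/(n-c)\le c$'' only counts the new edges and ignores the $\Theta(d)$ $V_M$-edges an outside vertex already carries in $G_0$.

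The fix is also the reason the dense base graph was never needed: take $G_0$ to be the empty graph, exactly as the paper does. Then $\ad_{G_0}=0$, so $G_0 \in \cG_{n,k}$ trivially, and there are no pre-existing edges to collide with, so both of your insertion recipes (and the paper's simpler one, which just connects each of the $c$ modified vertices to $k$ of the $n-c$ remaining vertices, wrapping around) go through without the edge-disjointness concern. With $G_0$ empty, your degree-concentration arithmetic---modified vertices deviate by $k(1-c/n)\le k$, unmodified by $\approx ck/n \le k$, plus your $c^2(c-k)\le k(n-c)^2$ check for the large-$k$ case---is sound up to the $\pm 1$ rounding you acknowledge. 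As written, though, the conflict with $G_0$'s edges is an unaddressed obstruction, so the proof is not complete.
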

\begin{proof}
Let $G_0$ be the empty graph on $n$ nodes.  Note that $p_{G_0} = 0$, $\ad_{G_0} = 0$, and $G_0$ is in $\cG_{n,k}$. 

We construct $G_1$ as follows.  Start with the empty bipartite graph with $\frac{1}{\eps}$ nodes on the left and $n - \frac{1}{\eps}$ nodes on the right.  We connect the first node on the left to each of the first $k$ nodes on the right, then the second node on the left to each of the next $k$ nodes on the right and so on, wrapping around to the first node on the right when we run out of nodes.  By construction, $p_{G_1} = k / \eps \binom{n}{2}$, $\ad_{G_{1}} =  2k / \eps n$.  Moreover, each of the first $\frac{1}{\eps}$ nodes has degree exactly $k$ and each of the nodes on the right has degree
$$
\frac{k / \eps}{n - 1/\eps} \pm 1 = \frac{k}{\eps n - 1} \pm 1
$$
Thus, for $n$ larger than some absolute constant, every degree lies in the interval $[\ad_{G_{1}} \pm k]$ so we have $G_{1} \in \cG_{n,k}$.  \end{proof}

\begin{lem}[Lower bound for small $k$] \label{lem:smallk}
For every $n\ge 4$ and $\eps\in[2/n,1/4]$,
there is a pair of graphs $G_0,G_1 \in \cG_{n,1}$ at node distance $1/\eps$ such that $| p_{G_{0}} - p_{G_{1}} | = \Omega(\frac{1}{\eps^2 n^2})$.
\end{lem}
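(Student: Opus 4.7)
The plan is to exhibit, for every $\eps \in [2/n, 1/4]$, an explicit pair of graphs $G_0, G_1 \in \cG_{n,1}$ at node distance $s := \lfloor 1/\eps \rfloor$ whose edge densities differ by $\Omega(1/(\eps^2 n^2))$. Partition $V = S \sqcup T$ with $|S| = s$, and let $H$ be any $(s-1)$-regular graph on $T$; such an $H$ exists (up to a one-vertex parity adjustment) since $s-1 \leq n-s-1$ whenever $\eps \geq 2/n$. Take $G_0 = K_S \cup H$, which is $(s-1)$-regular and hence lies in $\cG_{n,1}$. I will construct $G_1$ by modifying only the neighborhoods of the $s$ vertices of $S$, so the node distance is automatically at most $s$, and arrange that $|E(G_1)| - |E(G_0)| = \Theta(s^2)$, which translates to $|p_{G_0} - p_{G_1}| = \Theta(s^2/n^2) = \Theta(1/(\eps^2 n^2))$.

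In the easy regime $s \leq \sqrt{n}$, let $G_1 = H \cup B$, where $B$ is a bipartite graph in which each $v \in S$ is joined to $s-1$ vertices of $T$ and the $s$ neighborhoods $N_B(v)$ are pairwise disjoint; this is feasible because $s(s-1) \leq n - s$. Every vertex of $S$ then has degree $s - 1$ in $G_1$, and every vertex of $T$ has degree $s - 1$ or $s$, so $G_1 \in \cG_{n,1}$. Since $|E(G_1)| - |E(G_0)| = s(s-1) - \binom{s}{2} = s(s-1)/2 = \Theta(s^2)$, the desired density gap follows immediately.

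The main obstacle is the regime $s > \sqrt{n}$, where the disjoint-bipartite construction no longer keeps $G_1$ in $\cG_{n,1}$ since each $u \in T$ would need more than one cross edge. Here I would set $G_1 = (K_S \setminus R) \cup H \cup B$, with $B$ a biregular bipartite graph whose target degrees are $a := s-1$ on the $S$-side and $b := s(s-1)/(n-s)$ on the $T$-side, and with $R$ removing $r := a - b = (s-1)(n-2s)/(n-s) \in [0, s-1]$ internal edges at each $v \in S$. A short algebraic calculation shows that every vertex of $G_1$ then has degree exactly $(s-1) + b$, so $G_1$ is (approximately) regular and lies in $\cG_{n,1}$; non-integer values of $r$ or $b$ are handled by rounding the individual $r_i$ or $b_u$ by $\pm 1$, which shifts that vertex's degree by at most one and therefore preserves $1$-concentration by the same calculation. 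The net edge change is $|B| - |R| = s(s-1) n/(2(n-s)) = \Theta(s^2)$, again giving the required density gap. Realizing $H$ and $B$ as simple graphs with the prescribed (possibly rounded) degree sequences follows from standard Erd\H{o}s--Gallai / Gale--Ryser arguments.
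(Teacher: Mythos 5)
Your proof is correct, but it takes a noticeably different route from the paper's. The paper takes $G_0$ to be $i = \lceil n\eps\rceil$ disjoint near-equal cliques and $G_1$ to be $i+1$ disjoint near-equal cliques; both are automatically $1$-concentrated because the clique sizes differ by at most one, the node distance is bounded by the size $\ell \le 1/\eps$ of a single clique that gets redistributed, and the edge-count difference is at least $\binom{\ell+1}{2} = \Theta(1/\eps^2)$ by a short packing argument. Your construction instead fixes a set $S$ of size $s = \lfloor 1/\eps\rfloor$ and toggles the structure on $S$ against a background $(s-1)$-regular graph on $T = V\setminus S$, with a case split at $s \approx \sqrt{n}$: for small $s$ you replace $K_S$ by a disjoint-neighborhood bipartite graph, and for large $s$ you must carefully tune the biregular bipartite graph $B$ and a deletion graph $R$ so that the result remains near-regular. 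Both arguments work and yield the same $\Theta(s^2/n^2) = \Theta(1/\eps^2 n^2)$ density gap, but the paper's disjoint-cliques construction is a good deal simpler: it needs no case analysis, no degree balancing between $S$ and $T$, and no rounding/parity bookkeeping or invocation of Erd\H{o}s--Gallai/Gale--Ryser, all of which your approach has to attend to (and currently only sketches) in the $s > \sqrt{n}$ regime. Your argument does have the minor aesthetic advantage that the symmetric difference between $G_0$ and $G_1$ is visibly confined to the vertex set $S$, making the node-distance bound immediate.
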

\begin{proof}
Let $i = \lceil n \epsilon \rceil$, and let $G_0$ be the graph consisting of $i$ disjoint cliques each of size 
$\lfloor n/i \rfloor$ or $\lceil n/i \rceil$. Let $G_1$ be the graph consisting of $i+1$ disjoint cliques each of size 
$\lfloor n/(i+1) \rfloor$ or $\lceil n/(i+1) \rceil$. We can obtain $G_0$ from $G_1$ by taking one of the cliques and
redistributing its vertices among the $i$ remaining cliques, so $G_0$ and $G_1$ have node distance
$\ell := \lfloor n/(i+1) \rfloor \le 1/\epsilon$. 
For $1/4 \ge \epsilon \ge 2/n$ we have that $\ell \ge \lfloor 1/2\epsilon \rfloor > 1/4\epsilon$.
Transforming $G_1$ into $G_0$ involves removing a clique of size $\ell$, containing $\binom{\ell}{2}$ edges, 
and then inserting these $\ell$ vertices into cliques that already have size $\ell$, adding at least $\ell^2$ new edges.
Consequently $G_0$ contains at least $\ell^2 - \ell(\ell-1)/2 = \ell(\ell+1)/2$ more edges than $G_1$,
so 
\[
|p_{G_1} - p_{G_0}| \ge \frac{\binom{\ell+1}{2}}{\binom{n}{2}}
\ge \frac{\ell^2}{n^2} \ge \Omega(1/\eps^2 n^2),
\] 
as desired.
\end{proof}

Theorem~\ref{thm:main-lb-full} now follows by combining Lemmas~\ref{lem:locallb},~\ref{lem:largek}, and~\ref{lem:smallk}.

\section*{Acknowledgments}

\noindent Part of this work was done while the authors were visiting the Simons Institute for the Theory of Computing.
AS is supported by NSF MACS CNS-1413920, DARPA/NJIT Palisade 491512803, Sloan/NJIT 996698, and MIT/IBM W1771646.  JU is supported by NSF grants CCF-1718088, CCF-1750640, and CNS-1816028.  The authors are grateful to Adam Smith for helpful discussions.  

\bibliographystyle{alpha}
\bibliography{./bibliography/refs}

\end{document}